\documentclass[a4paper,11pt]{article}
\pdfoutput=1

\usepackage{jheppub}
\usepackage{natbib}
\usepackage{braket}
\usepackage{xcolor}
\usepackage{textcomp}
\usepackage{dcolumn}
\usepackage{bm}
\usepackage{cleveref}
\usepackage{algorithm}
\usepackage{algpseudocode}
\usepackage[utf8]{inputenc}
\usepackage{tcolorbox}
\usepackage{todonotes}
\usepackage{amssymb,amsmath,amsbsy,amsthm,mathtools}
\usepackage{braket}
\usepackage{graphicx,color}
\usepackage{multirow}
\usepackage{hyperref}
\usepackage{subcaption}
\captionsetup{compatibility=false}
\usepackage{adjustbox}
\usepackage{relsize} 
\usepackage{xcolor}
\usepackage{booktabs}
\usepackage{subcaption}
\usepackage{array}

\usepackage{slashed}

\newcolumntype{P}[1]{>{\centering\arraybackslash}p{#1}}
\newcolumntype{M}[1]{>{\centering\arraybackslash}m{#1}}

\newtheorem{definition}{Definition}[section]
\newtheorem{proposition}{Proposition}[section]
\newtheorem{lemma}{Lemma}[section]

\newtheorem{conjecture}{Conjecture}[section]

\newcommand{\mbE}{\mathbb{E}}

\newcommand{\mbZ}{\mathbb{Z}}

\usepackage{graphicx}

\title{Tripartite Correlation Signal from Multipartite Entanglement of Purification}

\author[a,b]{Ning Bao}
\author[a]{Keiichiro Furuya}
\author[a,c]{Joydeep Naskar}

\affiliation[a]{Department of Physics, Northeastern University, Boston, MA, 02115, USA}
\affiliation[b]{Computational Science Initiative, Brookhaven National Laboratory, Upton, NY 11973 USA}
\affiliation[c]{The NSF AI Institute for Artificial Intelligence and Fundamental Interactions, Cambridge, MA, U.S.A.}

\emailAdd{ningbao75@gmail.com}
\emailAdd{k.furuya@northeastern.edu}
\emailAdd{naskar.j@northeastern.edu}

\abstract{We propose a signal $\Delta^{(3)}_p$ for tripartite correlations in finite-dimensional quantum systems and $\Delta^{(3)}_w$ for holographic systems. We prove that $\Delta^{(3)}_p$ is non-negative for any tripartite entangled mixed states and vanishes for mixed product states. It is a correlation signal because it is generally nonzero for a separable state containing a classical mixture. Based on the conjecture, the equality between an entanglement wedge cross section $E_w$ and entanglement of purification $E_p$, i.e., $E_w = E_P$ in the semiclassical limit, we apply the tripartite correlation signal to study the structures of tripartite entanglement in AdS$_3$/CFT$_2$, especially for pure AdS$_3$. We comment on a generalization to $n$-partite correlation signals $\Delta^{(n)}_p(A_1:\cdots:A_n)$.}

\makeatletter
\gdef\@fpheader{}
\makeatother
\begin{document}

\maketitle

\section{Introduction}
Quantum entanglement is an ubiquitous phenomena of nature and have been extensively studied in the past century (e.g., see review \cite{quantum-entanglement-review}). In recent decades, the science and engineering of quantum computing have increased the focus on quantum entanglement, especially since entanglement is used as a resource in quantum computing. Both bipartite and multipartite entanglement have been under the lens of investigation. While we understand bipartite entanglement entropy very well, multipartite entanglement is relatively less understood. There have been several approaches towards defining a measure for multipartite entanglement (e.g., see review \cite{Ma:2023ecg}), and it is still an active research area.

The \texttt{It From Qubit} program has demonstrated the fundamental importance of quantum entanglement in the emergence of space-time itself \cite{VanRaamsdonk:2010pw,Swingle:2014uza,Swingle:2017blx,Swingle:2012wq,Bao:2019bib,Cao:2020uvb}, and the $AdS/CFT$ correspondence \cite{Maldacena:1997re} is an excellent laboratory for theoretical investigations of gravity through quantum entanglement\cite{Ryu:2006ef,Maldacena:2013xja,Pastawski:2015qua,Brown:2015bva,Bao:2015bfa}. It has been suggested in \cite{Balasubramanian:2014hda}, that the study multipartite entanglement in holography is inherently relevant for understanding multi-boundary wormholes, and at least tripartite entanglement is necessary for entanglement wedge cross-section \cite{Akers:2019gcv,Hayden:2021gno}. Some recent paradigms which attempt to understand multipartite entanglement in holography, include but are not limited to, the holographic entropy cone \cite{Bao:2015bfa,Bao:2024azn,Hernandez-Cuenca:2022pst,Avis:2021xnz,Czech:2021rxe}, multipartite generalizations of entanglement of purification \cite{Bao:2018fso,Bao:2019zqc,Bao-2019-conditionalEoP}, bit-threads\cite{Cui:2018dyq,Harper:2019lff}, multi-entropy \cite{Gadde:2022cqi,Gadde:2023zni,Gadde:2023zzj,Gadde:2024jfi,Gadde:2024taa,Harper:2024ker}, upper bounding entanglement entropy correlations\cite{Ju:2024hba,Ju:2024kuc,Ju:2025tgg}, and many others.

However, there is no widely accepted notion of a measure of multipartite entanglement in gravity (there are several proposals, e.g., see \cite{Gadde:2023zzj,Penington:2022dhr}). We will take a step back and ask ourselves, can we find reliable signals of multipartite entanglement in holography? A signal doesn't have to satisfy various properties that a measure must\cite{Ma:2023ecg}. A good signal is semi-definite in nature, where it can definitively indicate the presence of genuine multipartite entanglement with a non-zero value. Some efforts have been made in this direction as well (see, e.g., \cite{Balasubramanian:2024ysu}). In this work, we will define a signal of tripartite correlation for mixed states using multipartite entanglement of purification (MEoP) and its holographic version using multipartite entanglement wedge cross-section\cite{Bao:2019zqc,Bao-2019-conditionalEoP}. We found that the signal is non-zero for a tripartite entangled state and zero for a factorized mixed state. However, our definition of the signal does not vanish for some quantum states with classical correlations. In this sense, it is a tripartite correlation signal rather than a tripartite entanglement signal. 

\subsection*{Related Works and Organization}
Several related works appeared in the direction of signals and measures of genuine multipartite entanglement, in particular, we will list three of them below,
\begin{enumerate}
    \item In \cite{Balasubramanian:2024ysu}, three different signals of multipartite entanglement $I_n$, $R_n$, $Q_n$ are defined, but none of them are positive definite for all quantum states with true multipartite entanglement, i.e., the zero set of these signals also includes those quantum states that have genuine multipartite entanglement. 
    
    Our work is yet another variant of this work. The main distinction between their work and ours lies in the use of \textit{reflected entropy} and \textit{entanglement of purification} (EoP). Reflected entropy and EoP differ mainly in three points. First, the computation of the reflected entropy is simpler than that of the EoP because the latter requires an optimization over a purification\cite{terhalEoP2002}. Second, the reflected entropy is not necessarily a correlation measure for some quantum states\cite{Hayden2023REnotCorrelation}, i,e., it does not always satisfy the monotonicity under a partial trace. In contrast, the EoP is a correlation measure for any quantum state. Third, a canonically purified state can contain residual entanglement between a system and a purifier, whereas the optimal purified state that computes the EoP for a given density matrix contains only minimal entanglement to realize the purification. 
     
    Given these differences, we are motivated to construct signals for multipartite entanglement from EoPs as they did with reflected entropies\footnote{See \cite{Louisia:2025bxz,Mori:2025gqe} for the studies on multipartite correlation measures, involving multipartite generalizations of quantum discord and entanglement wedge cross section, and their inequalities in holography.}.
    
    In a recent work \cite{Balasubramanian:2025hxg}, it was claimed that purely GHZ-type states are forbidden in holography. This renders the proposed signals to be more useful in holography.

    \item In \cite{Iizuka:2025ioc,Iizuka:2025caq}, the authors have defined a measure using multi-entropy (a multipartite information quantity defined in \cite{Gadde:2022cqi}) by subtracting lower-partite ($k<n$) multi-entropy contributions to the $n$-party multi-party entropy. Our method uses a similar subtraction scheme, however, the relevant quantity that we use is multipartite entanglement wedge cross-section. Some more recent work on genuine multi-entropy can be found in \cite{Berthiere:2025toi,Harper:2025uui}.

    \item In \cite{Basak:2024uwc}, a quantity called latent entropy (L-entropy) is defined as a measure of multipartite entanglement using appropriate combination products and roots of bipartite entanglement quantities. However, our signal is a linear combination, and not a product.
\end{enumerate}

The organization of this paper is as follows. In section \ref{sec:density-matrix}, we introduce a definition of a tripartite correlation signal $\Delta_p^{(3)}$ for a density matrix. We compute the quantity in terms of a $GHZ_N$ state and a $W_N$ state. In section \ref{sec:holography}, we propose a definition of a tripartite correlation signal $\Delta_w^{(3)}$ for a holographic state. We study the quantity for pure AdS$_3$, and make brief comments in the case of a BTZ balck hole in section \ref{sec:discussions}. In section \ref{sec:delta_n}, we comment about the multipartite generalization of our tripartite quantity. In section \ref{sec:discussions}, we discuss potential applications and future work. \newline


\noindent
\textbf{Notation}
\begin{itemize}
    \item For a density matrix $\rho_A=\sum_i p_i \rho_i$, $S_A$ and $H(\{p\}_{\rho_A})$ are the von Neumann entropy of $\rho_A$ and the Shannon entropy of $\{p\}_{\rho_A} : =\{p_i\}$ respectively. 

    \item $E_p(A:B)$ is the entanglement of purification introduced in \cite{terhalEoP2002}. $E^{(n)}_p$ for $n\geq 2$ are the multipartite entanglment of purification defined in \eqref{eq:E_p_n}\cite{Umemoto-2018-MultipartiteEoP}.

    \item $E_w(A:B)$ is the entanglement wedge cross-section. $E^{(n)}_w$ for $n\geq 2$ are the multipartite entanglment wedge cross sections defined in \eqref{eq:E_W_N}\cite{Umemoto-2018-MultipartiteEoP}.


\end{itemize}

\section{Tripartite correlation signal $\Delta_p^{(3)}$ for density matrices}\label{sec:density-matrix}

We will construct and study the properties of tripartite correlation signal $\Delta^{(3)}_p(A:B:C)$ for a tripartite density matrix $\rho_{ABC}$ based on the multipartite entanglement of purification (MEoP) \cite{Umemoto-2018-MultipartiteEoP}. The MEoP is a multipartite generalization of the entanglement of purification (EoP) $E_p(A:B)$ proposed in \cite{terhalEoP2002}. $E_p(A:B)$ is defined as
\begin{equation} \label{eq:EoP}
    E_p(A:B) = \underset{\ket{\psi}_{A\tilde{A}B\tilde{B}}}{min} [S_{A\tilde{A}}]
\end{equation}
for a bipartite density matrix $\rho_{AB}$ and a purifier $\tilde{A}\tilde{B}$. We begin by reviewing the definition and the properties of MEoP.

\begin{definition}[Multipartite entanglement of purification(MEoP)\cite{Umemoto-2018-MultipartiteEoP}]

    For $n$-partite density matrices $\rho_{A_1\cdots A_n}$, the multipartite entanglement of purification(MEoP) is defined as
    \begin{equation}\label{eq:E_p_n}
        E_p^{(n)}(A_1: \cdots : A_n) := \underset{\ket{\psi}_{A_1\tilde{A}_1\cdots A_n\tilde{A}_n}}{min} [S_{A_1\tilde{A}_1}+\cdots +S_{A_n\tilde{A}_n}].
    \end{equation}
    where the minimization is taken over any purifications $\ket{\psi}_{A_1\tilde{A}_1\cdots A_n\tilde{A}_n}$ of $\rho_{A_1\cdots A_n}$.
\end{definition}

For the case of a bipartite density matrix $\rho_{AB}$, the EoP $E_p(A:B)$ is different from the \textit{bipartite EoP} $E^{(2)}_p(A:B)$ by a factor of $2$, i.e.,
\begin{equation}\label{eq:Ep2=2EoP}
    E^{(2)}_p(A:B) = \underset{\ket{\psi}_{A\tilde{A}B\tilde{B}}}{min}[S_{A\tilde{A}}+S_{B\tilde{B}}]  = 2 E_p(A:B).
\end{equation}
Here, we used the purification symmetry, i.e., $S_{A\tilde{A}}=S_{B\tilde{B}}$ at the last equality. The basic properties of the MEoP are listed in appendix \ref{app:MEoP}. 

\subsection{Definition and properties} 

We now propose a genuine tripartite correlation signal denoted by $\Delta_p^{(3)}(A:B:C)$ for a mixed state $\rho_{ABC}$. It indicates the existence of some genuine tripartite entanglement and classical correlations of a mixed quantum state by a positive value of $\Delta^{(3)}_p(A:B:C)$, and zero otherwise. As we will see later, $\Delta^{(3)}_p(A:B:C)$ in definition \ref{def:delta_p_3} vanishes for $GHZ_N$-type entanglement for any $n$. Thus, $\Delta^{(3)}_p(A:B:C)$ captures tripartite entanglement except $GHZ_N$-type entanglement\footnote{Similar to the quantities proposed in \cite{Balasubramanian:2024ysu}, the zero set of $\Delta^{(3)}_p(A:B:C)$ contains the quantum states only with $GHZ_N$-type entanglement.}.

In general, the tripartite EoP $E^{(3)}_p(A:B:C)$ captures both bipartite and tripartite correlations\cite{Umemoto-2018-MultipartiteEoP}. Thus, our first step is to subtract all the possible contributions from the bipartite correlations, i.e.,
\begin{equation}\label{eq:delta3_coefficient}
        \Delta_p^{(3)}(A:B:C) := E_p^{(3)}(A:B:C)  - c^{(2)}\Big[E_p^{(2)}(A:BC) + E_p^{(2)}(AB:C) + E_p^{(2)}(AC:B) \Big]
\end{equation}
where $c^{(2)}$ is a coefficient. The simplest choice for the coefficient can be made by satisfying the following conditions so that $\Delta_p^{(3)}(A:B:C)$ is a tripartite correlation signal; (i) $\Delta^{(3)}_p(A:B:C)=0$ for quantum states only with bipartite entanglement, and (ii) $\Delta^{(3)}_p(A:B:C)\geq 0$ for any quantum states. 

Explicitly, by imposing $(i)$ to \eqref{eq:delta3_coefficient} for $\rho_{ABC}=\rho_{AB} \otimes \rho_C$ as an example, we should have
\begin{equation}
\begin{split}
    \Big(1-2c^{(2)}\Big) E^{(2)}_p(A:B) = 0.
\end{split}
\end{equation}
Hence, we have $c^{(2)}=1/2$\footnote{We found a similar analysis in \cite{Iizuka:2025caq} where the authors have used multi-entropy \cite{Gadde:2022cqi}.}. Note that, $\Delta_p^{(3)}(A:B:C)=0$ for any coefficient for a fully product state (see \eqref{eq:EoP-product}).

We can easily see that the condition, $(ii)$ $\Delta^{(3)}_p(A:B:C)\geq 0$ for any quantum states, is satisfied when $c^{(2)}=1/2$ by \eqref{eq:lowerbound_Epn}, i.e.,
\begin{equation}
    E^{(3)}_p(A:B:C)\geq \frac{1}{2}\{E^{(2)}_p(A:BC)+E^{(2)}_p(B:AC)+E^{(2)}_p(C:AB)\}.
\end{equation}

Due to the above observations, we can propose \eqref{eq:delta3_coefficient} with $c^{(2)}=1/2$ as the definition of the tripartite correlation signal. We summarize the two conditions as the properties of $\Delta_p^{(3)}(A:B:C)$ below. In addition to them, we also show that $\Delta_p^{(3)}(A:B:C)=0$ when $\rho_{ABC}$ is pure, which can be proved by applying \eqref{eq:EoP-pure}. 
\begin{definition}[Tripartite correlation signal]\label{def:delta_p_3}
    For a density matrix $\rho_{ABC}$,
    \begin{equation}\label{eq:delta_p_3}
    \begin{split}
        \Delta_p^{(3)}(A:B:C) :=& E_p^{(3)}(A:B:C)  - \frac{1}{2} \Big[E_p^{(2)}(A:BC) + E_p^{(2)}(B:AC) + E_p^{(2)}(C:AB) \Big]\\
        :=& E_p^{(3)}(A:B:C)  - \frac{1}{2} \Big[\Delta_p^{(2)}(A:BC) + \Delta_p^{(2)}(B:AC) + \Delta_p^{(2)}(C:AB) \Big].\\
    \end{split}
    \end{equation}
\end{definition}

\begin{proposition}[Properties of $\Delta^{(3)}_p(A:B:C)$]\label{prop:delta3p}$\;$\newline

    \begin{enumerate}
        \item[0.] If a density matrix $\rho_{ABC}$ is a mixed product state, e.g., $\rho_{ABC} =  \rho_{AB} \otimes \rho_{C}$,
        \begin{equation}
            \Delta_p^{(3)}(A:B:C) =0.
        \end{equation}
        \item Non-negative for all quantum states
        \begin{equation}
            \Delta_p^{(3)}(A:B:C) \geq 0.
        \end{equation}
        \item If a density matrix is a pure state,
        \begin{equation}
            \Delta_p^{(3)}(A:B:C) =0.
        \end{equation}
        
    \end{enumerate}
\end{proposition}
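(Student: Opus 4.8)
The plan is to handle the three items in increasing order of difficulty, since item~1 is built into the definition, item~2 is a two-line computation, and item~0 is where the actual work lies. For item~1 there is nothing to do beyond citing the appendix: plugging the coefficient $c^{(2)}=1/2$ into \eqref{eq:delta_p_3} and invoking \eqref{eq:lowerbound_Epn} gives $\Delta_p^{(3)}(A:B:C)=E_p^{(3)}(A:B:C)-\tfrac12\big[E_p^{(2)}(A:BC)+E_p^{(2)}(B:AC)+E_p^{(2)}(C:AB)\big]\ge 0$. For item~2, I would use that any purification of a pure $\rho_{ABC}$ factorizes as $|\psi\rangle_{ABC}\otimes|\xi\rangle_{\tilde A\tilde B\tilde C}$, and each $S_{A_j\tilde A_j}$ is minimized by taking $|\xi\rangle$ fully product, so \eqref{eq:EoP-pure} gives $E_p^{(3)}(A:B:C)=S_A+S_B+S_C$; viewing $\rho_{ABC}$ as a pure bipartite state across each cut, \eqref{eq:Ep2=2EoP} and the pure-state value of EoP give $E_p^{(2)}(A:BC)=2S_A$ and likewise $2S_B,\,2S_C$. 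Substituting into \eqref{eq:delta_p_3} yields $\Delta_p^{(3)}=(S_A+S_B+S_C)-\tfrac12(2S_A+2S_B+2S_C)=0$.

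For item~0 I would prove $\Delta_p^{(3)}(A:B:C)\le 0$, which together with item~1 forces equality. Take first the representative sub-case $\rho_{ABC}=\rho_{AB}\otimes\rho_C$. Since $C$ decouples from $AB$, an optimal purification factorizes, so $E_p^{(3)}(A:B:C)\le E_p^{(2)}(A:B)$ (the $C$-factor is purified at zero cost), while \eqref{eq:lowerbound_Epn} supplies the reverse bound once one observes $E_p^{(2)}(A:BC)=E_p^{(2)}(B:AC)=E_p^{(2)}(A:B)$ (the extra subsystem is uncorrelated with the party it is appended to) and $E_p^{(2)}(C:AB)=0$ by \eqref{eq:EoP-product}. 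Hence the bracket in \eqref{eq:delta_p_3} equals $2E_p^{(2)}(A:B)$ and $\Delta_p^{(3)}=0$.

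The general form $\rho_{ABC}=\sum_i p_i\,\rho_{AB,i}\otimes\rho_{C,i}$ extends this by a ``sewing'' construction: combine the minimizing purification $|\mu_i\rangle$ of each $\rho_{AB,i}$ for the $A{:}B$ problem, a purification $|\nu_i\rangle$ of each $\rho_{C,i}$, and a classical which-branch register, superposed with amplitudes $\sqrt{p_i}$; assigning the ancillas $\tilde A_i\!\to\!A$, $\tilde B_i\!\to\!B$, $\tilde C_i\!\to\!C$ and routing the register (or copies of it) to the parties produces an explicit upper bound on $E_p^{(3)}(A:B:C)$ of the form $\sum_i p_i E_p^{(2)}(A{:}B;\rho_{AB,i})$ plus a Shannon-entropy overhead from the register. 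Running the analogous sewn constructions for the separable states $\rho_{A,BC}$, $\rho_{B,AC}$, $\rho_{C,AB}$ expresses each right-hand-side $E_p^{(2)}$ in the same ingredients plus its own register overhead. The crux --- and the step I expect to be the main obstacle --- is to show these overheads are mutually consistent, i.e.\ that the register's entropy enters $E_p^{(3)}$ as exactly one half of its total contribution to the three $E_p^{(2)}$ terms, so that the branch-wise cancellation from the product case survives the classical mixing. Establishing this needs matching lower bounds to accompany the sewn upper bounds --- obtained by applying \eqref{eq:lowerbound_Epn} to $\rho_{ABC}$ and standard lower bounds on EoP (e.g.\ $E_p^{(2)}(X:Y)\ge I(X:Y)$) to each cut --- so that $E_p^{(3)}$ and all three $E_p^{(2)}$ are pinned to their exact values; the perfectly classically correlated state $\sum_i p_i\,|a_ib_ic_i\rangle\langle a_ib_ic_i|$, for which one checks $E_p^{(3)}=3H(\{p_i\})$ and each $E_p^{(2)}=2H(\{p_i\})$, is the clean sanity check that the ratio is $1/2$.
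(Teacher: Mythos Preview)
Your handling of items~1 and~2, and of the product sub-case $\rho_{ABC}=\rho_{AB}\otimes\rho_C$ in item~0, is correct and coincides with the paper's: items~1 and~2 are dispatched there by pointing back to \eqref{eq:lowerbound_Epn} and \eqref{eq:EoP-pure}, and the product case is exactly the computation that fixed $c^{(2)}=1/2$.

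For the general separable case $\rho_{ABC}=\sum_i p_i\,\rho_{AB,i}\otimes\rho_{C,i}$, you and the paper pursue the same block-diagonal ``sewing'' purification. The difference is one of candor rather than strategy: the paper writes down, directly from that ansatz, the exact values $E_p^{(2)}(A{:}BC)=E_p^{(2)}(B{:}AC)=\min[\,\cdots]+2H(\{p\})$, $E_p^{(2)}(C{:}AB)=2H(\{p\})$, and $E_p^{(3)}(A{:}B{:}C)=\min[\,\cdots]+3H(\{p\})$, declaring the Shannon piece ``independent of minimization'' and never arguing that the optimum lies in the block-diagonal class. You correctly flag this as the crux and propose to close it with matching lower bounds. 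Your caution is justified --- for instance the paper's $E_p^{(2)}(C{:}AB)=2H(\{p\})$ already breaks if all $\rho_{C,i}$ coincide (then $C$ decouples and the left side vanishes), so an implicit choice of decomposition is being made. That said, your proposed fix via $E_p^{(2)}(X{:}Y)\ge I(X{:}Y)$ and \eqref{eq:lowerbound_Epn} will not in general pin the EoPs to their exact values either, since those bounds are rarely tight; so your sketch does not fully close the gap you identify. In short, your route is the paper's route, pursued with more care about which inequalities are actually proved; neither argument is airtight for the general mixed-separable case, but the obstruction you name is precisely the step the paper glosses over.
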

\begin{proof}$ $\newline

    \begin{enumerate}
    \item[0.] For $\rho_{ABC} =  \rho_{AB} \otimes \rho_{C}$, we consider some purified state of $\rho_{ABC}$,
    \begin{equation}
        \ket{\psi_{ABC}}\bra{\psi_{ABC}} = \ket{\psi_{AB\tilde{A}\tilde{B}}}\bra{\psi_{AB\tilde{A}\tilde{B}}} \otimes \ket{\phi_{C\tilde{C}}}\bra{\phi_{C\tilde{C}}},
    \end{equation}
    and their reduced states $\sigma_{A\tilde{A}}$, $\sigma_{B\tilde{B}}$, and $\sigma_{C\tilde{C}}$.
    Here, $\tilde{A},\tilde{B},\tilde{C}$ are the purifiers of $A,B,C$, respectively. We denote entropies of the reduced states of $A\tilde{A}$, $B\tilde{B}$, and $C\tilde{C}$ as $S_{A\tilde{A}}$, $S_{B\tilde{B}}$, and $S_{C\tilde{C}}$.
    The bipartite and tripartite EoPs can be computed as
    \begin{align}
        E^{(2)}_p(A:BC) & = E^{(2)}_p(B:AC) = \underset{\ket{\psi}_{ABC\tilde{A}\tilde{B}\tilde{C}}}{min}[S_{A\tilde{A}}+S_{B\tilde{B}}],\\
        E_p^{(2)}(C:AB)&=0,
    \end{align}
    and
    \begin{equation}
        E_p^{(3)}(A:B:C)=\underset{\ket{\psi}_{ABC\tilde{A}\tilde{B}\tilde{C}}}{min}[S_{A\tilde{A}}+ S_{B\tilde{B}}]
    \end{equation}
    Hence, one can simply check that $\Delta^{(3)}_p(A:B:C)=0$. 

    \item and 2. See the paragraph above definition \ref{def:delta_p_3}.
    
    \end{enumerate}
\end{proof}

\subsection{Examples: $GHZ_N$ and $W_N$}

In this subsection, we evaluate $\Delta^{(3)}_p(A:B:C)$ for $GHZ_N$ and $W_N$ states for some $N\in \mbZ_{+}$. In particular, they are
\begin{equation}\label{eq:GHZn}
    \ket{GHZ_N} := \frac{1}{\sqrt{2}}\big(\ket{0}^{\otimes N} + \ket{1}^{\otimes N}\big),
\end{equation}
and
\begin{equation}\label{eq:Wn}
    \ket{W_N} := \frac{1}{\sqrt{N}}\big(\ket{10\cdots 0} + \ket{010\cdots 0} + \cdots  + \ket{0\cdots 01} \big).
\end{equation}

We compute the upper bound and the lower bound of $E^{(3)}_p(A:B:C)$ and $E^{(2)}_p(A:BC)$\cite{terhalEoP2002,Umemoto-2018-MultipartiteEoP}. That is,
\begin{equation}\label{eq:uppr-lower-bounds-3}
\begin{split}
    & \min \{S_A + S_B+S_{AB}, S_B + S_C+S_{BC},  S_A + S_C+S_{AC} \} \\
    &\geq E^{(3)}_p(A:B:C) \\
    &\geq \max\{S_A +S_B+S_C-S_{ABC}, 2(S_A+S_B+S_C) - S_{AB}-S_{BC}-S_{CA}  \},
\end{split}
\end{equation}
and
\begin{equation}\label{eq:uppr-lower-bounds-2}
    2\min \{S_A, S_{AB}\} \geq E^{(2)}_p(A:BC) \geq I(A:B)+I(A:C).
\end{equation}

In the case of $\ket{GHZ_N}$, we have 
\begin{equation}\label{eq:GHZ_entanglement}
    S_A = S_B= S_C=S_{AB} = S_{BC} = S_{CA} = S_{ABC} = \ln 2.
\end{equation}
These entanglement entropies for $\ket{GHZ_N}$ for $N\geq 3$ do not depend on the size of subsystems $|A|,|B|,|C|$.
By inserting \eqref{eq:GHZ_entanglement} into \eqref{eq:uppr-lower-bounds-3} and \eqref{eq:uppr-lower-bounds-2}, we can find that the upper bound and the lower bound match. Thus,
\begin{equation}\label{eq:GHZ_N_EPs}
    E^{(3)}_p(A:B:C) = 3\ln 2,\; E^{(2)}_p(A:BC) = 2 \ln 2.
\end{equation}
Then it is simply $\Delta^{(3)}_p(A:B:C)=0$ for any $n$. This shows that $\Delta^{(3)}_p(A:B:C)$ cannot capture the entanglement pattern of $\ket{GHZ_N}$.

\begin{figure}
    \centering
    \includegraphics[width=1\linewidth]{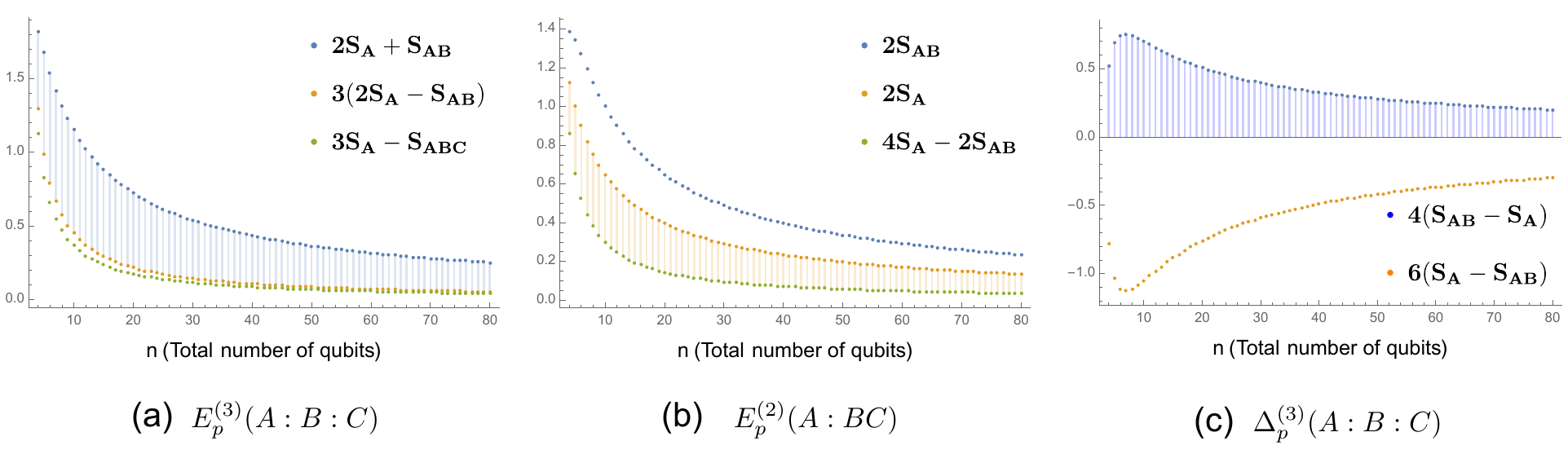}
    \caption{\small{(a) The upper bound and the candidate lower bounds in \eqref{eq:uppr-lower-bounds-3} are plotted as a function of total number $N$ of qubits for $\ket{W_N}$ \eqref{eq:Wn}. The blue dots are the upper bound. The orange and green dots are the candidate lower bounds. $E^{(3)}_p(A:B:C)$ as a function of $N$ takes a value between the blue dots and the orange dots because the orange dots maximize the lower bound. (b) Similarly, the candidate upper bounds and the lower bound of \eqref{eq:uppr-lower-bounds-2} is plotted as a function of $N$. $E^{(2)}_p(A:B:C)$ as a function of $N$ takes a value between the orange dots and the green dots because the orange dots minimize the upper bound. (c) We plotted \eqref{eq:upper-lower-bounds-delta} as a function of $n$.}}
    \label{fig:upper-lowerbounds-Wn}
\end{figure}

For $\ket{W_N}$, the entanglement entropy of the $k$-partite reduced density matrix $\rho^N_{k}$ for $k=1,\cdots, N-1$ is
\begin{equation}
    S(\rho^N_{k}) = -\frac{k}{N}\log \frac{k}{N} - \frac{N-k}{N}\log \frac{N-k}{N} 
\end{equation}
where $k$ is the number of subsystems. In general,
\begin{equation}
    S_A = S(\rho^N_{|A|}),\; S_B = S(\rho^N_{|B|}),\; S_B = S(\rho^N_{|C|}).
\end{equation} 
Here, we consider the case, such as $|A|=|B|=|C|=1$. Then, we have
\begin{equation}
\begin{split}
     S_A = S_B= S_C=&  -\frac{1}{N}\log \frac{1}{N} - \frac{N-1}{N}\log \frac{N-1}{n} ,\\  
     S_{AB} = S_{BC} = S_{CA} =&  -\frac{2}{N}\log \frac{2}{N} - \frac{N-2}{N}\log \frac{N-2}{N},\\
     S_{ABC} =&  -\frac{3}{N}\log \frac{3}{N} - \frac{N-3}{N}\log \frac{N-3}{N}.
\end{split}
\end{equation}
We can simplify \eqref{eq:uppr-lower-bounds-3} and \eqref{eq:uppr-lower-bounds-2} into
\begin{equation}\label{eq:uppr-lower-bounds-3-Wn}
    2S_A +S_{AB} \geq E^{(3)}_p(A:B:C) \geq \max\{3S_A -S_{ABC}, 3(2S_A - S_{AB})\},
\end{equation}
and
\begin{equation}\label{eq:uppr-lower-bounds-2-Wn}
    2\min \{S_A, S_{AB}\} \geq E^{(2)}_p(A:BC) \geq 4S_A -2S_{AB}.
\end{equation}
Furthermore, we have $E^{(2)}_p(A:BC)=E^{(2)}_p(B:AC)=E^{(2)}_p(C:AB)$.
Instead of giving the analytical expressions of the upper bounds and lower bounds, we plot each of them as a function of $n$ in figure \ref{fig:upper-lowerbounds-Wn}. 

The difference between the upper bound of $E^{(3)}_p(A:B:C)$ and the lower bound of $3/2 E^{(2)}_p(A:BC)$ gives a naive upper bound of $\Delta^{(3)}_p(A:B:C)$, which is $Max(\Delta^{(3)}_p(A:B:C))= 4(S_{AB} - S_A)$.
The difference between the lower bound of $E^{(3)}_p(A:B:C)$ and the upper bound of $3/2 E^{(3)}_p(A:BC)$ is negative, that is, $6(S_A - S_{AB})\leq 0$. However, $\Delta^{(3)}_p(A:B:C)\geq 0$. These simply imply that $\Delta^{(3)}_p(A:B:C)$ could saturate to $0$. 

In short, $\Delta^{(3)}_p(A:B:C)$ takes a value within the following range, i.e.,
\begin{equation}\label{eq:upper-lower-bounds-delta}
    4(S_{AB} - S_A) \geq \Delta^{(3)}_p(A:B:C) \geq 0.
\end{equation}

\section{Tripartite correlation signal $\Delta^{(3)}_w$ for holographic states}\label{sec:holography}

For a bipartite holographic state in AdS$_{d+1}$/CFT$_d$, $E_p(A:B) = E_w(A:B)$\cite{Umemoto2018EoPholography}, where $E_p(A:B)$ is the entanglement of purification and $E_w(A:B)$ is the entanglement wedge cross section.

Reference \cite{Umemoto-2018-MultipartiteEoP} proposed the multipartite entanglement wedge cross-section(MEWCS) defined below as a candidate holographic dual of MEoP. Before providing the definition, we set the notation\footnote{Our notation closely follows \cite{Umemoto2018EoPholography,Umemoto-2018-MultipartiteEoP} with minor differences.}.

Consider a constant time slice $M$ of AdS$_3$/CFT$_2$ and $n$ boundary subregions $A_1 \cup \cdots \cup A_n \subseteq \partial M$ that are basically disjoint from each other\footnote{We will also consider a case where some subregions are touching to each other.}. The entanglement wedge $M_{A_1\cdots A_n}$ on the constant time slice $M$ is defined by the bulk region surrounded by the boundary subregions $A_1 \cup \cdots \cup A_n$ and the codimension-two minimal surface $\Gamma^{(n)}_{min}(A_1:\cdots :A_n)$, i.e.,
\begin{equation}
    \partial M_{A_1\cdots A_n} = A_1\cup \cdots \cup A_n \cup \Gamma^{(n)}_{min}(A_1:\cdots :A_n).
\end{equation}

To define a cross section of $M_{A_1\cdots A_n}$, we partition $\partial  M_{A_1\cdots A_n}$ arbitrarily,
\begin{equation}
    \partial M_{A_1\cdots A_n} = \tilde{A}_1\cup \cdots \cup \tilde{A}_n,
\end{equation}
such that\footnote{We sometimes denote the subregions by $\hat{A}_m$ instead of $\tilde{A}_m$ to avoid confusion when we have two different purifications on a single geometry. See figure \ref{fig:delta3-ewcs} for instance.} 
\begin{equation}
    \tilde{A}_m \supseteq A_m, \; \forall m=1,\cdots, n.
\end{equation}

Then, the cross section $\Sigma^{(n)}_{min}(\tilde{A}_1:\cdots:\tilde{A}_n)$ is defined by
\begin{equation}
    \Sigma^{(n)}_{min}(\tilde{A}_1:\cdots:\tilde{A}_n) = \Sigma_{min}(\tilde{A}_1)\cup\cdots\cup \Sigma_{min}(\tilde{A}_n)
\end{equation}
where $\Sigma_{min}(\tilde{A}_m)$ are the minimal surface homologous to $\tilde{A}_m$ and $\partial (\tilde{A}_m)= \partial(\Sigma_{min}(\tilde{A}_m)) \in \partial M_{A_1\cdots A_n}$ for all $m=1,\cdots,n$.

The definition of the entanglement wedge cross section requires a further minimization by tuning the size of $\tilde{A}_1,\cdots, \tilde{A}_n$.
\begin{definition}[Multipartite entanglement wedge cross section\cite{Umemoto-2018-MultipartiteEoP}\footnote{An alternative definition of multipartite EWCS was given in \cite{Bao-2019-conditionalEoP} which differ from \cite{Umemoto-2018-MultipartiteEoP} by a factor of $1/n$. We adopt the definition from \cite{Umemoto-2018-MultipartiteEoP}.}]\label{def:E_W_N}
     \begin{equation} \label{eq:E_W_N}
         E_w^{(n)}(A_1:\cdots:A_n) := \underset{ A_1\subseteq \tilde{A}_1,\cdots, A_n\subseteq\tilde{A}_n}{min}\frac{L_{AdS} }{4G_N} Area\Big(\Sigma_{min}^{(n)}(\tilde{A}_1:\cdots:\tilde{A}_n )\Big)
     \end{equation}
\end{definition}

Similar to \eqref{eq:Ep2=2EoP}, the bipartite EWCS $E^{(2)}_w(A:B)$ is defined to be twice the EWCS $E_w(A:B)$, i.e.,
\begin{equation}
    E^{(2)}_w(A_1:A_2) := 2E_w(A:B).
\end{equation}

\subsection{Definition and properties}

Inspired by definition \ref{def:delta_p_3}, we now introduce a definition of a tripartite correlation signal for holographic states.
\begin{definition}[Tripartite correlation signal for a holographic state] 

For a holographic state $\ket{\psi}_{ABCO}$
\begin{equation}\label{eq:delta_w_3}
\begin{split}
    \Delta_w^{(3)}(A:B:C) & := E^{(3)}_w(A:B:C)  - \frac{1}{2} \Big[ E_w^{(2)}(A:BC) +E_w^{(2)}(AB:C) + E_w^{(2)}(AC:B) \Big]\\
    & = E^{(3)}_w(A:B:C)  - \Big[ E_w(A:BC) +E_w(AB:C) + E_w(AC:B) \Big]\\
\end{split}
\end{equation}
\end{definition}

Similar to the conjecture $E_w(A:B) = E_p (A:B)$ in the bipartite case, \cite{Umemoto-2018-MultipartiteEoP} proposed the following conjecture. It was further explored in \cite{Bao-2019-conditionalEoP}.
\begin{conjecture}[\cite{Umemoto-2018-MultipartiteEoP,Bao-2019-conditionalEoP}]\label{con:ewcs-conjecture}
    \begin{equation}
         E_p^{(n)}(A_1: \cdots : A_n)  = E_w^{(n)}(A_1: \cdots : A_n)
    \end{equation}
\end{conjecture}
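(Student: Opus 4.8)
The plan is to establish the equality by the same two-sided strategy that underlies the bipartite statement $E_p(A:B)=E_w(A:B)$ of \cite{Umemoto2018EoPholography}: prove $E_p^{(n)}\le E_w^{(n)}$ and $E_p^{(n)}\ge E_w^{(n)}$ separately, working throughout in the semiclassical limit $G_N\to 0$ where the RT prescription and the surface/state correspondence are available. The inequality $E_p^{(n)}\le E_w^{(n)}$ is the constructive half. Starting from a partition $\tilde A_1\cup\cdots\cup\tilde A_n=\partial M_{A_1\cdots A_n}$ that realizes the minimum in Definition \ref{def:E_w_n}, I would build an explicit purification of $\rho_{A_1\cdots A_n}$ from the geometric data: cut the entanglement wedge $M_{A_1\cdots A_n}$ open along the $n$ components $\Sigma_{min}(\tilde A_m)$ of the minimal cross section, and use the surface/state correspondence to attach auxiliary degrees of freedom along the portion of $\Gamma^{(n)}_{min}(A_1:\cdots:A_n)$ inside $\tilde A_m\setminus A_m$, which play the role of the purifier $\tilde A_m$ in \eqref{eq:E_p_n}. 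In the resulting geometry the minimal surface homologous to $A_m\tilde A_m$ is exactly $\Sigma_{min}(\tilde A_m)$, so $S_{A_m\tilde A_m}=\tfrac{L_{AdS}}{4G_N}\mathrm{Area}\big(\Sigma_{min}(\tilde A_m)\big)$ for this purification; summing over $m$ gives $\sum_m S_{A_m\tilde A_m}=E_w^{(n)}$, and since $E_p^{(n)}$ is the minimum over all purifications, $E_p^{(n)}\le E_w^{(n)}$.

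The reverse inequality $E_p^{(n)}\ge E_w^{(n)}$ is the substantive part, and the reason the statement stays a conjecture: one must show that \emph{no} purification --- not even a wildly non-geometric one --- makes $\sum_m S_{A_m\tilde A_m}$ smaller than the minimal cross section. The natural heuristic is that the optimal purification is itself geometric: representing the CFT state by a tensor network whose bulk is the entanglement wedge, any splitting of the network into the $n$ parties plus their purifiers must sever a set of bonds whose count is bounded below by the area of a bulk surface homologous to $A_m\tilde A_m$, hence by $\mathrm{Area}(\Sigma_{min}(\tilde A_m))$ after optimizing the partition. A more algebraic route would combine the information-theoretic inequalities that $E_p^{(n)}$ is known to obey --- the lower bounds in \eqref{eq:uppr-lower-bounds-3} and \eqref{eq:uppr-lower-bounds-2}, monotonicity under discarding a party, and the polygamy/subadditivity relations recalled in appendix \ref{app:MEoP} --- with the corresponding geometric (in)equalities for $E_w^{(n)}$, and try to bootstrap from the already-established bipartite equality by peeling off one party at a time; yet another would route through a multipartite version of reflected entropy, whose holographic dual is the cross section via a replica-symmetric gravitational saddle.

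The main obstacle is precisely this lower bound: making the tensor-network picture into a theorem requires control of bulk reconstruction and the surface/state correspondence beyond what is currently available, and there is no a priori reason the minimization over \emph{all} purifications cannot undershoot the geometric answer. I therefore expect a complete proof to be out of reach with present techniques, and would instead aim for (i) checking that for $n=2$ the statement reduces to the established $E_p=E_w$, (ii) verifying it in explicit symmetric configurations of pure AdS$_3$ where both sides are computable, and (iii) confirming consistency under limits --- sending one subregion to the whole boundary, or shrinking it to zero --- which is essentially the body of evidence that supports the bipartite conjecture.
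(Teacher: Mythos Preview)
The statement you are trying to prove is explicitly labeled a \emph{conjecture} in the paper, attributed to \cite{Umemoto-2018-MultipartiteEoP,Bao-2019-conditionalEoP}, and the paper does not attempt to prove it; rather, the paper's holographic results (Proposition~\ref{prop:delta_p_3_properties} and the identification $\Delta_p^{(n)}=\Delta_w^{(n)}$) are built \emph{assuming} it holds. So there is no paper proof to compare against, and your own assessment---that the upper bound $E_p^{(n)}\le E_w^{(n)}$ is constructive via surface/state while the lower bound $E_p^{(n)}\ge E_w^{(n)}$ requires control over all (including non-geometric) purifications and is currently out of reach---is exactly the state of affairs the paper implicitly accepts by leaving the statement as a conjecture. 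Your proposal is therefore not a proof but an accurate status report, and that is the correct outcome here; no further argument is expected or supplied by the paper.
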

By conjecture \ref{con:ewcs-conjecture}, we have 
\begin{equation}
    \Delta_p^{(n)} = \Delta_w^{(n)}.
\end{equation}
Our studies of $\Delta^{(3)}_w$ for holographic states rest on the above conjecture. That is, if the conjecture holds,  $\Delta^{(3)}_w$ must satisfy the properties of $\Delta^{(3)}_p$ in proposition \ref{prop:delta3p}. We exhibit those properties from the viewpoints of bulk geometries.

\begin{proposition}[Properties of $\Delta^{(3)}_w(A:B:C)$]\label{prop:delta_p_3_properties}$\;$\newline

    \begin{enumerate}
    \item[0.] If a holographic state $\ket{\psi}_{ABCO}$ does not have a connceted tripartite entanglement wedge $M_{ABC}$,
        \begin{equation}
            \Delta_w^{(3)}(A:B:C) =0.
        \end{equation}
    \item Non-negative up to the leading order of $G_N$ for all holographic states 
        \begin{equation}
            \Delta_w^{(3)}(A:B:C) \geq 0.
        \end{equation}
    \item If a density matrix is a pure state,
        \begin{equation}
            \Delta_w^{(3)}(A:B:C) =0.
        \end{equation}
    \end{enumerate}
    
\end{proposition}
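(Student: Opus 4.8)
The plan is to take Conjecture~\ref{con:ewcs-conjecture} as given, which identifies $\Delta^{(3)}_w$ with $\Delta^{(3)}_p$ at leading order in $G_N$, so that all three items already follow from Proposition~\ref{prop:delta3p}; the purpose of the proposition is to reprove each of them from the bulk, i.e.\ using entanglement wedges, minimal surfaces, and the homology constraint built into Definition~\ref{def:E_w_n}. The only genuinely geometric input needed is how $E^{(k)}_w$ behaves when the wedge $M_{ABC}$ either disconnects or has an empty RT surface $\Gamma^{(3)}_{min}(A:B:C)$; everything else is bookkeeping with the enlarged regions $\tilde A_m\supseteq A_m$.

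For item~1 (non-negativity) I would establish directly the bulk inequality $E^{(3)}_w(A:B:C)\ge E_w(A:BC)+E_w(AB:C)+E_w(AC:B)$, the holographic counterpart of \eqref{eq:lowerbound_Epn}. Let $(\tilde A,\tilde B,\tilde C)$ be a partition of $\partial M_{ABC}$ realizing the minimum in \eqref{eq:E_w_n}, so that $E^{(3)}_w(A:B:C)=\frac{L_{AdS}}{4G_N}\big(\mathrm{Area}(\Sigma_{min}(\tilde A))+\mathrm{Area}(\Sigma_{min}(\tilde B))+\mathrm{Area}(\Sigma_{min}(\tilde C))\big)$. Each $\Sigma_{min}(\tilde A)$ is homologous to $\tilde A$ inside $M_{ABC}$, hence separates $\tilde A$ from $\tilde B\cup\tilde C$; since $A\subseteq\tilde A$, $BC\subseteq\tilde B\cup\tilde C$, and $\tilde A\cup(\tilde B\cup\tilde C)=\partial M_{ABC}$, the split $(\tilde A,\,\tilde B\cup\tilde C)$ is admissible in the definition of $E_w(A:BC)$, so $E_w(A:BC)\le\frac{L_{AdS}}{4G_N}\mathrm{Area}(\Sigma_{min}(\tilde A))$, and similarly for $B$ and $C$. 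Summing the three estimates gives $E_w(A:BC)+E_w(AB:C)+E_w(AC:B)\le E^{(3)}_w(A:B:C)$, i.e.\ $\Delta^{(3)}_w(A:B:C)\ge 0$; the ``leading order in $G_N$'' qualifier merely records the regime of validity of the RT/EWCS prescriptions and of Conjecture~\ref{con:ewcs-conjecture}.

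For item~2 (pure $\rho_{ABC}$), purity forces $S_{ABC}=0$, hence $\Gamma^{(3)}_{min}(A:B:C)=\emptyset$ and $\partial M_{ABC}=A\cup B\cup C$; then $\tilde A_m\supseteq A_m$ together with $\tilde A_1\cup\tilde A_2\cup\tilde A_3=\partial M_{ABC}$ forces $\tilde A_m=A_m$, all minimizations become trivial, the cross-sections reduce to RT surfaces, and $E^{(3)}_w(A:B:C)=S_A+S_B+S_C$ while $E_w(A:BC)=S_A$, $E_w(AB:C)=S_C$, $E_w(AC:B)=S_B$, which cancel in \eqref{eq:delta_w_3}. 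For item~0, note that three regions cannot split into two or more parts each containing at least two of them, so a disconnected $M_{ABC}$ must isolate one of $A,B,C$, say $C$, in a component $M_C$ not containing $A$ or $B$. Absorbing the entire boundary of $M_C$ into whichever enlarged region contains $C$ makes the cross-section piece inside $M_C$ empty (a minimal surface homologous to a full boundary component is empty, while distributing that boundary among several regions only adds length), so $E^{(3)}_w(A:B:C)$ and the two bipartite EWCS's that keep $C$ together with another region all collapse to the single quantity $E^{(2)}_w(A:B)$ evaluated in the remaining component $M_{AB}$, whereas the EWCS separating $C$ from $AB$ vanishes; substituting into \eqref{eq:delta_w_3} gives $E^{(2)}_w(A:B)-\tfrac12\big(E^{(2)}_w(A:B)+0+E^{(2)}_w(A:B)\big)=0$. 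The remaining split patterns are handled identically, with $E^{(2)}_w(A:B)=0$ when even $M_{AB}$ is disconnected.

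I expect item~0 to be the main obstacle: one must argue carefully that a component missing one of $A,B,C$ can be absorbed into a single member of each partition at zero cost, and then track exactly which bipartite EWCS each term degenerates to in each splitting pattern. This is a careful case analysis rather than a conceptual difficulty; a cleaner alternative is to prove item~0 purely information-theoretically from the factorization of $E^{(k)}_w$ over disconnected wedges---the holographic analog of $E_p=0$ for product states---and then quote Conjecture~\ref{con:ewcs-conjecture}. Items~1 and~2 are short once the homology constraint is in hand.
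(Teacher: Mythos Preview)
Your argument is correct, and for items~0 and~2 it matches the paper's approach: the paper also treats the disconnected case by noting that, say, $L_a=L_b=l_a=l_b$ and $L_c=l_c=0$ when $C$ decouples, and treats the pure case by collapsing all cross-sections onto the RT surfaces $\Gamma_{min}(A),\Gamma_{min}(B),\Gamma_{min}(C)$. Your item~0 is in fact written more carefully than the paper's, which only spells out one splitting pattern.

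For item~1 you take a genuinely different route. You prove $\Delta^{(3)}_w\ge 0$ by a direct comparison argument: the optimal tripartite partition $(\tilde A,\tilde B,\tilde C)$ of $\partial M_{ABC}$, coarsened to $(\tilde A,\tilde B\cup\tilde C)$, is an admissible competitor for $E_w(A:BC)$, and the minimal surface homologous to $\tilde A$ is the same in both problems, so $E_w(A:BC)\le\frac{L_{AdS}}{4G_N}\mathrm{Area}(\Sigma_{min}(\tilde A))$; summing over permutations gives the inequality. This is the holographic analog of the proof of \eqref{eq:lowerbound_Epn} and is dimension- and geometry-independent. The paper instead works explicitly in AdS$_3$/CFT$_2$: it observes that the tripartite cross-section $(L_a,L_b,L_c)$ forms a hyperbolic triangle with strictly positive interior angles, while the bipartite geodesics $(l_a,l_b,l_c)$, being common perpendiculars to the RT surfaces, can never close into a triangle (that would force all angles to zero); together with the fact that each $l_i$ is the unique shortest geodesic between its pair of RT surfaces, this forces $L_i\ge l_i$ for all $i$ with at least two strict inequalities. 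Your argument is cleaner and more general; the paper's buys a strict inequality in the fully connected phase and a concrete geometric picture (including a Beltrami--Klein proof) at the cost of being specific to AdS$_3$.
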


\begin{proof}
Consider subregions $A,B,C,O$ on the boundary of AdS$_3$/CFT$_2$ in a fully connected phase, see figure \ref{fig:delta3-ewcs}. Let us set $L_{AdS}/4G_N =1$. Here, $O$ is a complementary boundary region of $A\cup B\cup C$. In the fully connected phase, the complementary region $O$ is partitioned into three regions, i.e., $O = O_1 \cup O_2 \cup O_3$. For a reduced density matrix $\rho_{ABC}$ of the boundary subsystem $ABC$, the bulk dual geometry of a purified state is the entanglement wedge $M_{ABC}$ whose boundary is 
\begin{equation}
    \partial M_{ABC} = A\cup B\cup C \cup \Gamma^{(3)}_{min}(A:B:C).
\end{equation}
By following the definition of MEoP, we have 
\begin{equation}
    E_w^{(2)}(A:BC) = 2l_a,\; E_w^{(2)}(B:AC) = 2l_b,\; E_w^{(2)}(C:AB) = 2l_c.
\end{equation}
and
\begin{equation}
    E_w^{(3)}(A:B:C) = L_a + L_b +L_c.
\end{equation}
For simplicity, we assume that $\tilde{A},\tilde{B},\tilde{C}$ for bipartite EoP and $\hat{A},\hat{B},\hat{C}$ for tripartite EoP are all optimized and give $l_a,l_b,l_c$ and $L_a,L_b,L_c$ as, figure \ref{fig:delta3-ewcs},
\begin{equation}
    l_a: = Area\Big(\Sigma_{min}(\tilde{A})\Big),\;l_b: = Area\Big(\Sigma_{min}(\tilde{B})\Big),\;l_c: = Area\Big(\Sigma_{min}(\tilde{C})\Big),\;
\end{equation}
and 
\begin{equation}
    L_a: = Area\Big(\Sigma_{min}(\hat{A})\Big),\;L_b: = Area\Big(\Sigma_{min}(\hat{B})\Big),\;L_c: = Area\Big(\Sigma_{min}(\hat{C})\Big).
\end{equation}

\begin{figure}
    \centering
    \includegraphics[width=0.9\linewidth]{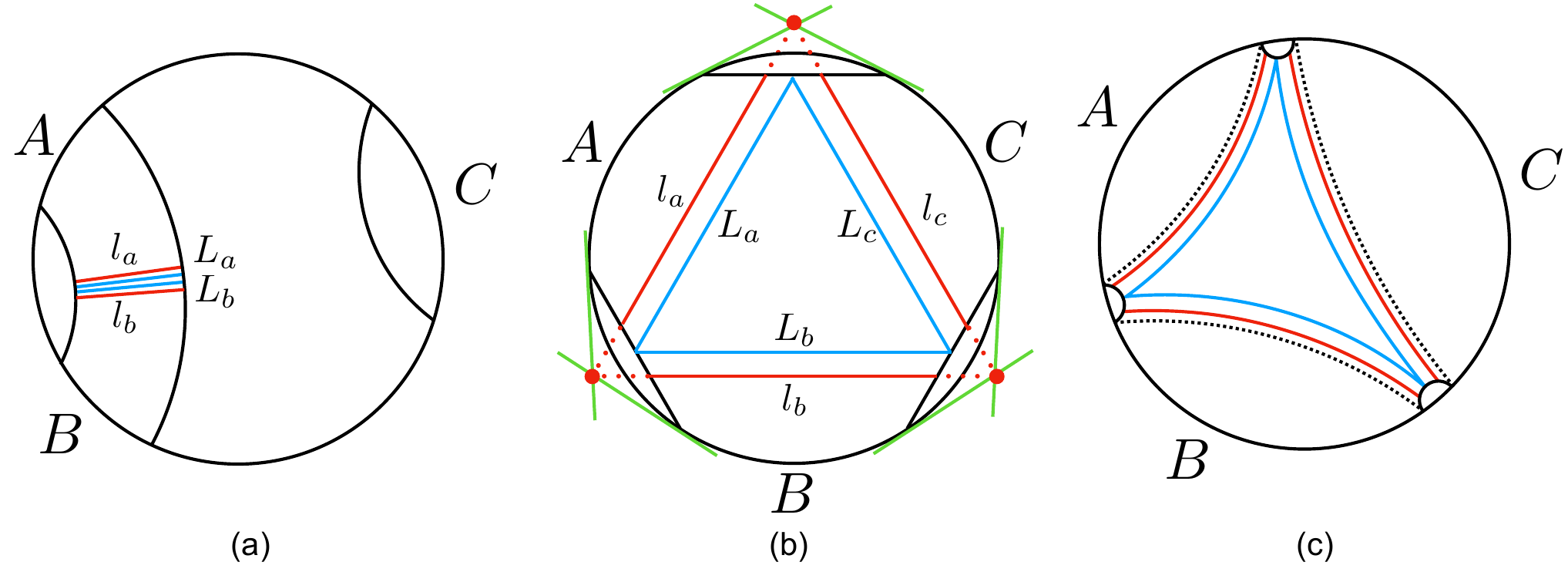}
    \caption{\small{(a) The red curves are $l_a$ and $l_b$. The blue curves are $L_a$ and $L_b$. The figure is exaggerated so that one can see the explicit contributions of $L_a,L_b,l_a,l_b$ to the corresponding MEoP. See \eqref{eq:bipartite_connected_phase} and \eqref{eq:bipartite_connected_phase_EWCS}. (b) A graphical proof using the Beltrami-Klein model. Black lines are the minimal surfaces of the entanglement wedge $\Gamma^{(3)}_{min}(A:B:C)$ in a fully connected phase. Green lines are tangent to the boundary of the boundary subregions $A,B,C$. Red points represent the intersections of the green lines. $L_a,L_b,L_c$ form a triangle, whereas $l_a,l_b,l_c$ cannot form a triangle within the disk unless $ABC$ covers the whole boundary of the disk. (c) The exaggerated figure representing the phase just right before ABC covers the whole boundary. Black dotted curves are the RT surfaces $\Gamma_{min}(A),\Gamma_{min}(B),\Gamma_{min}(C)$ of boundary subregion $A,B,C$. When $ABC$ becomes the whole boundary, we have \eqref{eq:pure_state_ewcs}. }}
    \label{fig:delta-w-proofs}
\end{figure}

$ $\newline
\noindent
\begin{enumerate}
    \item[0.] (Figure \ref{fig:delta-w-proofs} (a)) When the state has only bipartite entanglement, such as $\rho_{ABC} = \rho_{AB}\otimes \rho_C$, the entanglement wedge $AB$ is in the connected phase, but the entanglement wedges $AB$ and $C$ are disconnected. After the phase transition from the fully connected phase to such a phase, we have, up to $O(1/G_N)$,
    \begin{equation}\label{eq:bipartite_connected_phase}
        E_w^{(3)}(A:B:C) = E_w^{(2)}(A:BC) = E_w^{(2)}(B:AC), \;  E_w^{(2)}(C:AB) = 0
    \end{equation}
    because
    \begin{equation}\label{eq:bipartite_connected_phase_EWCS}
        L_a = L_b = l_a=l_b, \; L_c=l_c = 0.
    \end{equation}
    Inserting \eqref{eq:bipartite_connected_phase} to \eqref{eq:delta_w_3}, we get
    \begin{equation}
        \Delta^{(3)}_{w}(A:B:C) = 0 . 
    \end{equation}

    \item[1.] (Figure \ref{fig:delta3-ewcs} and \ref{fig:delta-w-proofs} (b)) Consider a hyperbolic triangle with angles $\theta_1,\theta_2,\theta_3$. The sum of the angles is strictly smaller than $\pi$, i.e.,
    \begin{equation}
        \pi > \theta_1+\theta_2+\theta_3.
    \end{equation}
    Moreover, if the vertices of the triangles are in the interior of the Poincaré disk and thus its edges have a finite length, then, the angles are strictly larger than $0$, that is,
    \begin{equation}\label{eq:finitetriangle_angles}
        \theta_1,\theta_2,\theta_3>0.
    \end{equation}

    For $l_a,l_b,l_c$ to form a triangle with the angles $\theta_1,\theta_2,\theta_3$, all of the angles become
    \begin{equation}
        \theta_m=0,\; m=1,2,3
    \end{equation}
    since they are the unique geodesics that are perpendicular to the RT surfaces where their endpoints land. However, this contradicts with \eqref{eq:finitetriangle_angles}. Hence, $l_a,l_b,l_c$ do not share their endpoints unless their endpoints are asymptotically close to the boundary so that they are pairwise parallel asymptotically.

    On the contrary, $L_a,L_b,L_c$ can form a hyperbolic triangle with, for instance, angles $\theta_1,\theta_2,\theta_3$ of a finite area by construction, figure \ref{fig:delta3-ewcs}. From the above discussion and the fact that $l_a,l_b,l_c$ are i) the unique shortest geodesics between the RT surfaces and ii) common perpendicular to the RT surfaces, at least two of $L_i$'s are always larger than $l_i$'s, for instance,
    \begin{equation}
        L_a=l_a,\; L_b>l_b,\; L_c>l_c.
    \end{equation}
    Therefore, $\Delta^{(3)}_w\geq 0$. For a graphical proof based on the Beltrami-Klein model in AdS$_3$/CFT$_2$, see figure \ref{fig:delta-w-proofs} (b).

    \item[2.] (Figure \ref{fig:delta-w-proofs} (c)) If a holographic state for $ABC$ is a pure state, we have
    \begin{equation}\label{eq:pure_state_ewcs}
        L_a = l_a=\Gamma_{min}(A) , \;L_b = l_b=\Gamma_{min}(B),\; L_c=l_c=\Gamma_{min}(C).
    \end{equation}
    Thus, $\Delta_w^{(3)}(A:B:C)=0$.
\end{enumerate}

\end{proof}

\begin{figure}[t]
    \centering
    \includegraphics[width=0.4\linewidth]{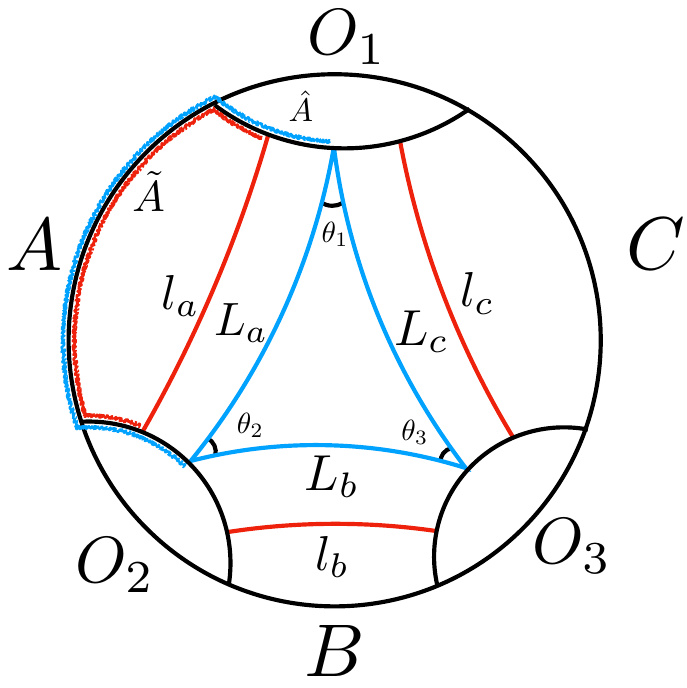}
    \caption{\small{The connected entanglement wedge of $ABC$ in a static time-slice of $AdS_3/CFT_2$.
    The red curves denote $E_w^{(2)}(A:BC) = 2E_w(A:BC)=2l_a$ and permutations, $E_w^{(2)}(B:AC)=2E_w(B:AC)=2l_b$ and $E_w^{(2)}(C:AB)=2E_w(C:AB)=2l_c$. The blue curve denotes $E_w^{(3)}(A:B:C)=L_a+L_b+L_c$, which form a hyperbolic triangle. $\theta_1,\theta_2,\theta_3$ are the angles of the hyperbolic triangle. $\tilde{A}\supset A$ and $\hat{A}\supset A$ are the optimal boundary subregions to compute $l_a$ and $L_a$. $\tilde{B},\hat{B}$ and $\tilde{C},\hat{C}$ are defined similarly, although they are suppressed from the figure. }}
    \label{fig:delta3-ewcs}
\end{figure}

\subsection{Example: Pure $AdS_3$}
Consider the vacuum $AdS_3$ and take three equal subregions $A$, $B$ and $C$ symmetrically chosen. We have drawn the surfaces relevant for multipartite entanglement of purification in figure \ref{fig:delta3-ewcs}. Let $\phi_A, \phi_B, \phi_C$ denote the mid-points of the boundary subregions $A,B,C$ respectively, which are given as follows:
\begin{equation}
    A: [\phi_A-\alpha_A, \phi_A+\alpha_A], \quad B: [\phi_B-\alpha_B, \phi_B+\alpha_B], \quad C: [\phi_C-\alpha_C, \phi_C+\alpha_C].
\end{equation}

For our symmetric choice of regions, we have
\begin{equation}\label{eq:regions_ABC}
\begin{split}
    & \phi_A=0, \quad \phi_B=2\pi/3 , \quad \phi_C=4\pi/3 \\
    & \alpha_A=\alpha_B=\alpha_C=\alpha
\end{split}
\end{equation}

For the fully connected phase given in figure \ref{fig:delta3-ewcs}, we have
\begin{equation}\label{eq:E2ABC}
    E^{(2)}_w(A:BC)=\frac{2L_{AdS}}{4G} \log\left(
  \frac{
    \left( \sqrt{\sin\left(\frac{2\pi}{3}\right) \sin\left(\frac{\pi}{3}\right)} + \sqrt{\sin(\alpha) \sin\left(\frac{\pi}{3} + \alpha\right)} \right)^2
  }{
    \sin\left(\frac{2\pi}{3} + \alpha\right) \sin\left(\frac{\pi}{3} - \alpha\right)
  }
\right),
\end{equation}
and similarly for $E(B:AC)$ and $E(C:AB)$. Here, we have used the Beltrami-Klein model for this computation following \cite{Nguyen:2017yqw}. We also have,
\begin{equation}\label{eq:E3ABC}
    E^{(3)}_w(A:B:C)=\frac{3L_{AdS}}{4G} \operatorname{arccosh}\left( 
  1 + \frac{6\tan^2{\left(\frac{\pi}{4}-\frac{\alpha}{2}\right)}}{\left(1-\tan^2{\left(\frac{\pi}{4}-\frac{\alpha}{2}\right)}\right)^2} 
\right),
\end{equation}
where we have identified the minimizing configuration of $\hat{A}, \hat{B}, \hat{C}$ to be the symmetric one, and computed the hyperbolic lengths for the same on a Poincare disk. We see that as $\alpha\rightarrow \frac{\pi}{3}$, both of these surfaces approach the RT surface, and are divergent. It is straightforward to check that both these expressions \ref{eq:E2ABC} and \ref{eq:E3ABC} have the same leading order divergent behavior near $\alpha=\frac{\pi}{3}-\delta$ (where $\delta>0$ is arbitrarily small), captured by (see figure \ref{fig:Delta3vsalpha_comparisons})
\begin{equation}\label{eq:E2-limit-E3}
     \lim_{\alpha \rightarrow \frac{\pi}{3}-\delta} \frac{1}{3} E_w^{(3)}(A:B:C) \sim \log{\left(\frac{\sqrt{3}}{\delta}\right)} \sim \lim_{\alpha \rightarrow \frac{\pi}{3}-\delta} \frac{1}{2}E_w^{(2)}(A:BC),
\end{equation}

\begin{figure}[t]
    \centering
    \includegraphics[width=1\linewidth]{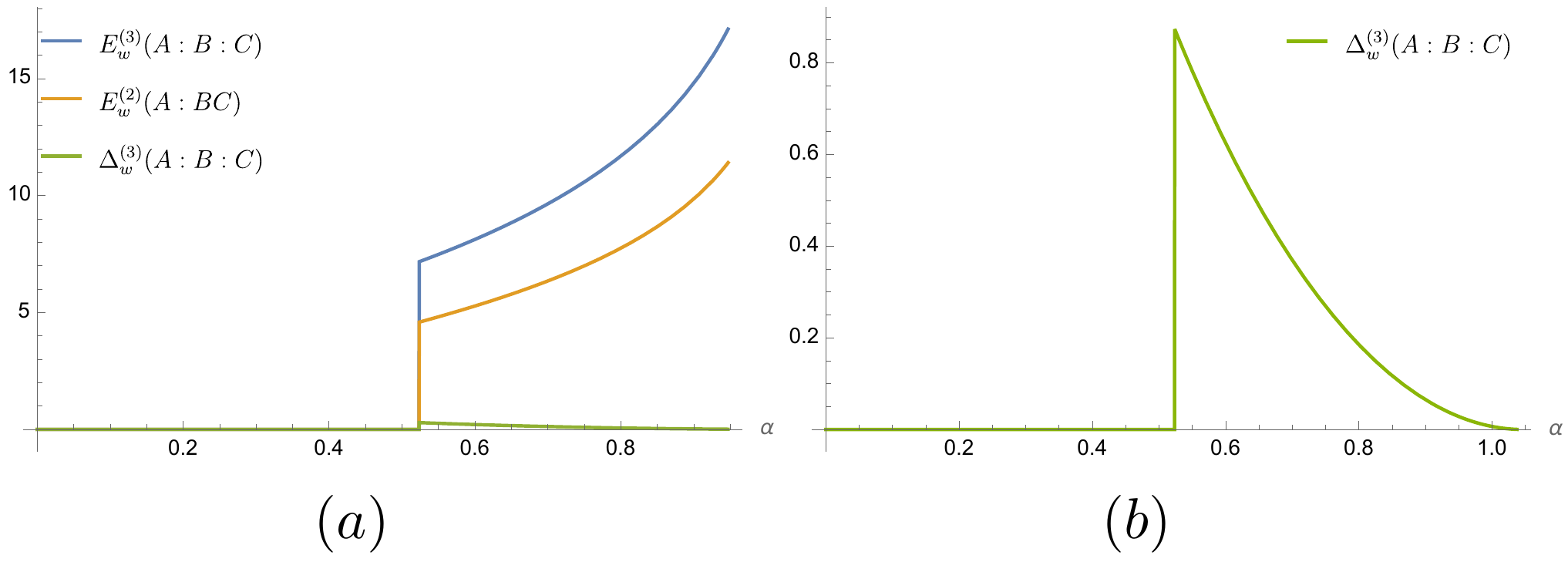}
    \caption{\small{We have set $L_{AdS}/4_{G_N}=1$. At $\alpha=\frac{\pi}{6}$, there is a discontinuous jump in $\Delta_w^{(3)}(A:B:C)$, due to the disconnected–connected phase transition of the tripartite entanglement wedge $M_{ABC}$, after which $\Delta_w^{(3)}(A:B:C)$ monotonically decreases. (a) A comparison of $E_w^{(3)}(A:B:C)$, $E_w^{(2)}(A:BC)$ and $\Delta_w^{(3)}(A:B:C)$. The $x$-axis is cut-off at $\alpha=\pi/3-10^{-1}$. (b) $\Delta_w^{(3)}(A:B:C)$ as a function of $\alpha$. The $x$-axis is cut-off at $\alpha=\pi/3-10^{-2}$. }}
    \label{fig:Delta3vsalpha_comparisons}
\end{figure}

Let us discuss the phases of this configuration as we vary the size of the boundary regions $A$, $B$ and $C$, which for this symmetric case, we have parameterized by $\alpha$,

\begin{itemize}
    \item When $\alpha$ is small enough, i.e., $\alpha< \alpha_{0}$ (we found $\alpha_{0}\approx \frac{1}{2}\left(\frac{\pi}{3}\right)$), where $\alpha_{0}$ is the value of $\alpha$ at which connected-disconnected phase transition of the entanglement wedge $M_{ABC}$ of $ABC$ takes place, we have the disconnected phase. In this phase, $E_w^{(3)}(A:B:C)=0$ and also $E_w^{(2)}(A:BC)=E_w^{(2)}(B:AC)=E_w^{(2)}(C:AB)=0$. This gives us $\Delta_w^{(3)}(A:B:C)=0$. This corresponds to a completely factorizable state $\ket{\psi}_{\hat{A}\hat{B}\hat{C}}\equiv\ket{\psi}_{\hat{A}}\otimes\ket{\psi}_{\hat{B}}\otimes\ket{\psi}_{\hat{C}}$.

    \item In the connected phase, where $\alpha_{0}< \alpha <\pi/3$, we have a monotonically decreasing $\Delta_w^{(3)}>0$. More precisely, at $\alpha=\alpha_{0}$, there is a discontinuous jump in $\Delta_w^{(3)}$ due to change of topology, and then it continuously approaches $0$ at $\alpha=\pi/3$.

    \item When $\alpha=\pi/3$, then the blue curve and the red curve approaches to coincide with each other and their areas are approximated by equation \ref{eq:E2-limit-E3}. Therefore, $\Delta_w^{(3)}=0$. This corresponds to a pure state $\ket{\psi}_{ABC}$ defined on the full boundary region. It is noteworthy that the individual quantities $E_w^{(3)}(A:B:C)$ and $E_w^{(2)}(A:BC)$ (and permutations) are divergent by themselves (and we have to use a regulator at the boundary), but $\Delta_w^{(3)}$ built from their linear combination is well-behaved, and all divergences get cancelled (see figure \ref{fig:Delta3vsalpha_comparisons}).
\end{itemize}

Before we move on to the next section, we discuss i) the monotonicity of $\Delta^{(3)}_w(A:B:C)$, ii) a phase of geometry with genuine multipartite entanglement, and iii) the polygamy of MEoP in our specific setup where $|A|=|B|=|C|$ and the distance between the subregions are equal.

In figure \ref{fig:Delta3vsalpha_comparisons}, $\Delta^{(3)}_w$ monotonically increases as $\alpha$ decreases, i.e.,
\begin{equation}
    \Delta^{(3)}_w(A_1:B_1:C_1)\leq \Delta^{(3)}_w(A_2:B_2:C_2)
\end{equation}
for $A_1>A_2, B_1>B_2, C_1>C_2$. In other words, $\Delta^{(3)}_w$ monotonically increases under partial tracing each subregion evenly. We could redefine the measure by
\begin{equation}\label{eq:tilde_delta_w_3}
    \tilde{\Delta}^{(3)}_w(A:B:C):= \max [ \tilde{\Delta}^{(3)}_w(A:B:C)] - \tilde{\Delta}^{(3)}_w(A:B:C)
\end{equation}
so that
\begin{equation}
    \tilde{\Delta}^{(3)}_w(A_1:B_1:C_1)\geq \tilde{\Delta}^{(3)}_w(A_2:B_2:C_2)
\end{equation}
for $A_1>A_2, B_1>B_2, C_1>C_2$. We would like to ask what its operational interpretation is in a future work.  

In figure \ref{fig:ewcs-tri-polygamy} (a), we plotted $E^{(2)}_w(A:BC)$ and $E^{(2)}_w(A:B)$. On one hand, $E^{(2)}_w(A:BC)$ is computed with the tripartite entanglement wedge $M_{ABC}$ in a fully connected phase. On the other hand, $E^{(2)}_w(A:B)$ is computed with the bipartite entanglement wedge $M_{AB}$ in a fully connected phase. We observe that the transition point of the tripartite entanglement wedge $M_{ABC}$ occurs earlier than that of the bipartite entanglement wedge $M_{AB}$. As expected, the tripartite entanglement wedge $M_{ABC}$ exhibits genuine multipartite entanglement among $A,B,C$. We can explicitly see it from figure \ref{fig:ewcs-tri-polygamy} that the genuine tripartite entanglement among $A,B,C$ is captured by $\Delta_w^{(3)}(A:B:C)$, and, thus, is present.
We will briefly discuss its potential connection to multipartite entanglement distillation and holographic quantum error correction in section \ref{sec:discussions}.

\begin{figure}[t]
    \centering
    \includegraphics[width=1\linewidth]{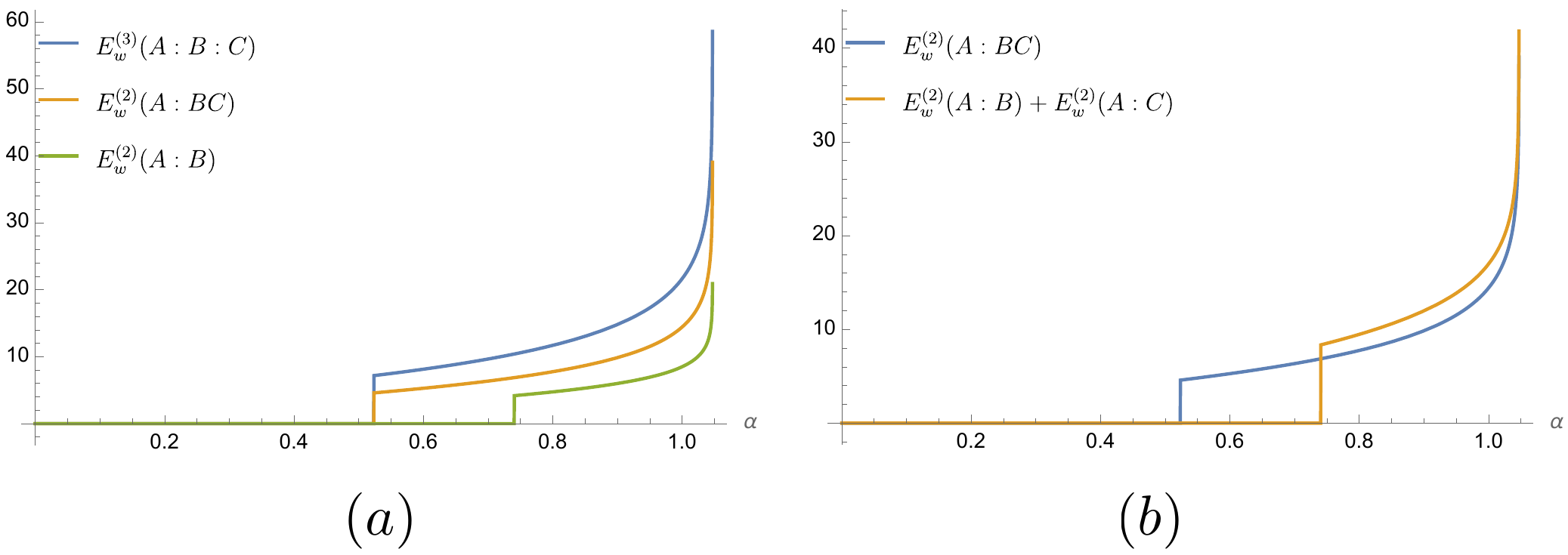}
    \caption{\small{We have set $L_{AdS}/4G_N=1$. The transition points of the blue curves happen at $\alpha^{(3)}=\pi/6$, and are the same as in figure \ref{fig:Delta3vsalpha_comparisons}. The transition points of the orange curves happen at $\alpha^{(2)}=\pi/3\sqrt{2}$. The $x$-axis is cut-off at $\alpha=\pi/3-10^{-4}$. (a) The phase $\alpha^{(2)}\leq \alpha \leq \alpha^{(3)}$ exhibits the genuine tripartite entanglement. (b) The polygamy of $E^{(2)}(A:BC)$.}}
    \label{fig:ewcs-tri-polygamy}
\end{figure}

The polygamy of $E^{(2)}_p(A:BC)$ or $E_p(A:BC)$ has been proven to hold for a pure state \eqref{eq:polygamy-MEoP}. In figure \ref{fig:ewcs-tri-polygamy} (b), we can see that the polygamy of holographic MEoP, 
\begin{equation}
    E_w^{(2)}(A:BC) \leq E_w^{(2)}(A:B) + E_w^{(2)}(A:C),
\end{equation}
does not hold for a mixed state, but does hold for a pure state at $\alpha \to \frac{\pi}{3}$ in our setup. We have the polygamy when $\alpha$ is larger than the transition point of $E_w^{(2)}(A:B)+E_w^{(2)}(A:C)$, or that of the bipartite entanglement wedge $M_{AB}$, whereas we don't when $\alpha$ is smaller than the transition point. We should note that this is still compatible with the monogamy of mutual information(MMI)\cite{Hayden2013MMI}
\begin{equation}
    I(A:BC)\geq I(A:B) + I(A:C)
\end{equation}
since the phase where $M_{ABC}$ is connected, but $M_{AB}$ is disconnected, has $I(A:BC)\neq 0$ and $I(A:B)=I(A:C) = 0$. We will leave the explorations of these quantities and the Markov gap\cite{Hayden:2021gno} for the purpose of understanding the multipartite entanglement pattern in holographic geometry for the future.

\section{Towards $n$-partite correlation signal $\Delta_p^{(n)}$}\label{sec:delta_n}


In this section, we provide the definition of $\Delta^{(4)}_p(A:B:C:D)$ and briefly discuss its properties listed in proposition \ref{prop:delta4p}. We found that the quantity is sign indefinite because $\Delta^{(4)}_p(A:B:C:D)$ is proportional to tripartite information $I_3(A:B:C)$\cite{} for a four-partite pure state. For example, it becomes negative for four-partite holographic pure states, whereas it is positive for $\Delta^{(4)}_p(A:B:C:D)\geq 0$ for GHZ$_N$ states for any $N\geq 4$, see appendix \ref{app:delta4p}.

We define $\Delta^{(4)}_p(A:B:C:D)$ as follows.
\begin{definition}[A candidate four-partite correlation signal for a density matrix]\label{def:delta_p_4}
    For a mixed density matrix $\rho_{ABCD}$, $\Delta^{(4)}_p(A:B:C:D)$ is defined as
    \begin{equation}\label{eq:delta_p_4}
    \begin{split}
        \Delta^{(4)}_p&(A:B:C:D) \\
        :=& E^{(4)}_p(A:B:C:D)\\
        &\quad - \frac{1}{3} \{E^{(3)}_p(AB:C:D) +E^{(3)}_p(AC:B:D)+E^{(3)}_p(AD:B:C) \\
        &\qquad \qquad+E^{(3)}_p(BC:A:D)+E^{(3)}_p(BD:A:C)+E^{(3)}_p(CD:A:B)\} \\
        &\quad + \frac{1}{6}\{E^{(2)}_p(A:BCD) + E^{(2)}_p(B:ACD) + E^{(2)}_p(C:ABD) + E^{(2)}_p(D:ABC)\}\\
        &\quad + \frac{1}{6}\{E^{(2)}_p(AB:CD) + E^{(2)}_p(AC:BD) +E^{(2)}_p(AD:BC)\} \\
        :=& E^{(4)}_p(A:B:C:D) \\
        &\quad - \frac{1}{3} \{\Delta^{(3)}_p(AB:C:D) +\Delta^{(3)}_p(AC:B:D)+\Delta^{(3)}_p(AD:B:C) \\
        &\qquad \qquad+\Delta^{(3)}_p(BC:A:D)+\Delta^{(3)}_p(BD:A:C)+\Delta^{(3)}_p(CD:A:B)\} \\
        &\quad - \frac{1}{3}\{\Delta^{(2)}_p(A:BCD) + \Delta^{(2)}_p(B:ACD) + \Delta^{(2)}_p(C:ABD) + \Delta^{(2)}_p(D:ABC)\}\\
        &\quad - \frac{1}{6}\{\Delta^{(2)}_p(AB:CD) + \Delta^{(2)}_p(AC:BD) +\Delta^{(2)}_p(AD:BC)\}.\\
    \end{split}
    \end{equation}
    
\end{definition}

The coefficients in front of each term is determined by imposing $\Delta^{(4)}_p(A:B:C:D)=0$ for quantum states only with bipartite and tripartite entanglement. A similar logic was applied to find the coefficients of $\Delta^{(3)}_p(A:B:C)$ in section \ref{sec:density-matrix}. 

To make the discussion parallel to the case of $\Delta_p^{(3)}(A:B:C)$, we study the following properties of $\Delta_p^{(4)}(A:B:C:D)$ in appendix \ref{app:delta4p}.
\begin{proposition}[Properties of $\Delta^{(4)}_p(A:B:C:D)$]\label{prop:delta4p}$\;$\newline

    \begin{enumerate}
        \item[0.] If a density matrix $\rho_{ABCD}$ is a mixed product state, e.g., $\rho_{ABCD} =  \rho_{ABC} \otimes \rho_{D}$,
        \begin{equation}
            \Delta_p^{(4)}(A:B:C:D) =0.
        \end{equation}
        \item Non-negative for GHZ$_N$ states for any $N\in \mbZ_+$, i.e.,
        \begin{equation}
            \Delta_p^{(4)}(A:B:C:D) \geq 0.
        \end{equation}
        \item If a density matrix is a four-partite pure state,
        \begin{equation}\label{eq:delta_p4_pure_state}
        \begin{split}
        \Delta_p^{(4)}(A:B:C:D) &= \frac{1}{3} \{S_A+S_B+S_C+S_D- (S_{AB}+S_{AC}+S_{AD})\} \\
        &= \frac{1}{3} I_3(X:Y:Z),
        \end{split}
        \end{equation}
        where 
        \begin{equation}
            I_3(X:Y:Z) = I(X:Y) + I(X:Z)-I(X:YZ),
        \end{equation}
        and $X,Y,Z = \{A,B,C,D$\}.
        
    \end{enumerate}
\end{proposition}
\begin{proof}
    See appendix \ref{app:delta4p}.
\end{proof}




We postpone any further explorations of $\Delta^{(n)}_p$ and $\Delta^{(n)}_w$ for quantum states, including holographic ones. It is also in our future interest to develop a systematic method for determining their coefficients.

\section{Discussions}\label{sec:discussions}

\subsection{Markov gap $S_R-I$ and tripartite correlation signal $\Delta^{(3)}_w$}

It was conjectured in \cite{Cui2020BitThreads} that a holographic state contains mostly bipartite entanglement, which was later called the Mostly Bipartite Conjecture(MBC) in \cite{Akers2020EWCSTripartite}. In \cite{Akers2020EWCSTripartite}, the MBC was found to be incompatible with the conjectures i) $S_R(A:B) = 2E_w(A:B)$\cite{Dutta:2019gen}, and ii) $E_p^{(2)}(A:B) = 2E_w(A:B)$, or $E_p(A:B) = EW(A:B)$\cite{Umemoto2018EoPholography}. It was argued that this incompatibility is attributed to the contributions of tripartite entanglement to the entanglement wedge cross-sections $E_w(A:B)$. This aspect was further studied and observed as the universal lower bound of \textit{Markov gap}, $S_R(A:B)-I(A:B)$ in \cite{Hayden:2021gno}, which indicates that the boundaries in the EWCS should capture tripartite entanglement.

We observed that our quantity $\Delta^{(3)}_w$ also captures some types of tripartite entanglement. This observation becomes prominent in the phase where $E^{(2)}_w(A:BC)\neq 0 $ and $E^{(2)}_w(A:B) = 0$. This phase does not contain bipartite correlations between boundary subregion $A$ and $B$, $A$ and $C$, and $B$ and $C$. Although there are still bipartite correlations between subregion $A$ and $BC$, $B$ and $AC$, and $C$ and $AB$, those contributions are subtracted when defining $\Delta^{(3)}_w(A:B:C)$. Moreover, we proved $\Delta^{(3)}_p(A:B:C)=0$ for any $GHZ$-type entanglement. Together with the conjecture $\Delta^{(3)}_w(A:B:C)=\Delta^{(3)}_p(A:B:C)$, it implies at least that there is a contribution of non-GHZ type tripartite entanglement. It is interesting to explore which geometric objects can capture which and what types of multipartite entanglement. 

An approach toward the questions would be to study operational interpretations of $\Delta^{(3)}_w(A:B:C)$ or $\tilde{\Delta}^{(3)}_w(A:B:C)$ in \eqref{eq:tilde_delta_w_3} in terms of, for instance, distillation of multipartite entanglement\cite{Mori2022distillation,Mori:2024gwe,Gun2024distillation}, or holographic quantum error correction codes\cite{Almheiri2015bulklocality,Harlow2017RTfomulaQEC,Pastawski2015Happy,Pastawski:2015qua,Faulkner:2020hzi,Furuya2022Petz,Chandra2023RTN-HoloCodes,Verlinde2013BH-QEC,Kar2023Non-isometric,Leutheusser2025subsubduality,Czech:2025jnw,Faulkner:2022ada,Freivogel2016gaugeQECadscft}.

\subsection{Distillation of multipartite entanglement for a holographic state}

Entanglement distillation is the \textit{asymptotic} quantum information process that extracts, using local operations and classical communication(LOCC), some number of pure EPR pairs from a large number of copies of a quantum state. For a holographic state, \cite{Bao2019BeyondDistillation,Mori2022distillation} achieved \textit{one-shot} entanglement distillation by constructing a tensor network satisfying some conditions. The geometric counterpart of the holographic entanglement distillation corresponds to pushing a boundary subregion until it matches its RT surface, and distilled EPR pairs are observed to live across the RT surface. Relation to EWCS has been explored in \cite{Gun2024distillation,Mori:2024gwe}. It is interesting to study how to distill, for instance, tripartite and bipartite entanglement captured by the tripartite EWCS $E^{(3)}_w(A:B:C)$.

\subsection{Holographic codes for mixed states and optimally purified states}

Holographic quantum error correction codes have been explored for the purpose of understanding bulk reconstructions in a holographic system in the semiclassical limit\cite{Almheiri2015bulklocality,Harlow2017RTfomulaQEC,Pastawski2015Happy,Pastawski:2015qua,Faulkner:2020hzi,Furuya2022Petz,Chandra2023RTN-HoloCodes,Verlinde2013BH-QEC,Kar2023Non-isometric, Leutheusser2025subsubduality}. 
Recently, its relation to the holographic entropy inequalities has been discussed in \cite{Czech:2025jnw}.

For our future work\cite{BFM:preparation}, we are interested in the role of multipartite entanglement in the properties of the following two types of holographic codes: i) a \textit{mixed code}, a holographic code associated with a mixed state, and ii) a \textit{purified code}, a holographic code associated with an optimally purified state, see figure \ref{fig:QECC}.
\begin{figure}
    \centering
    \includegraphics[width=0.9\linewidth]{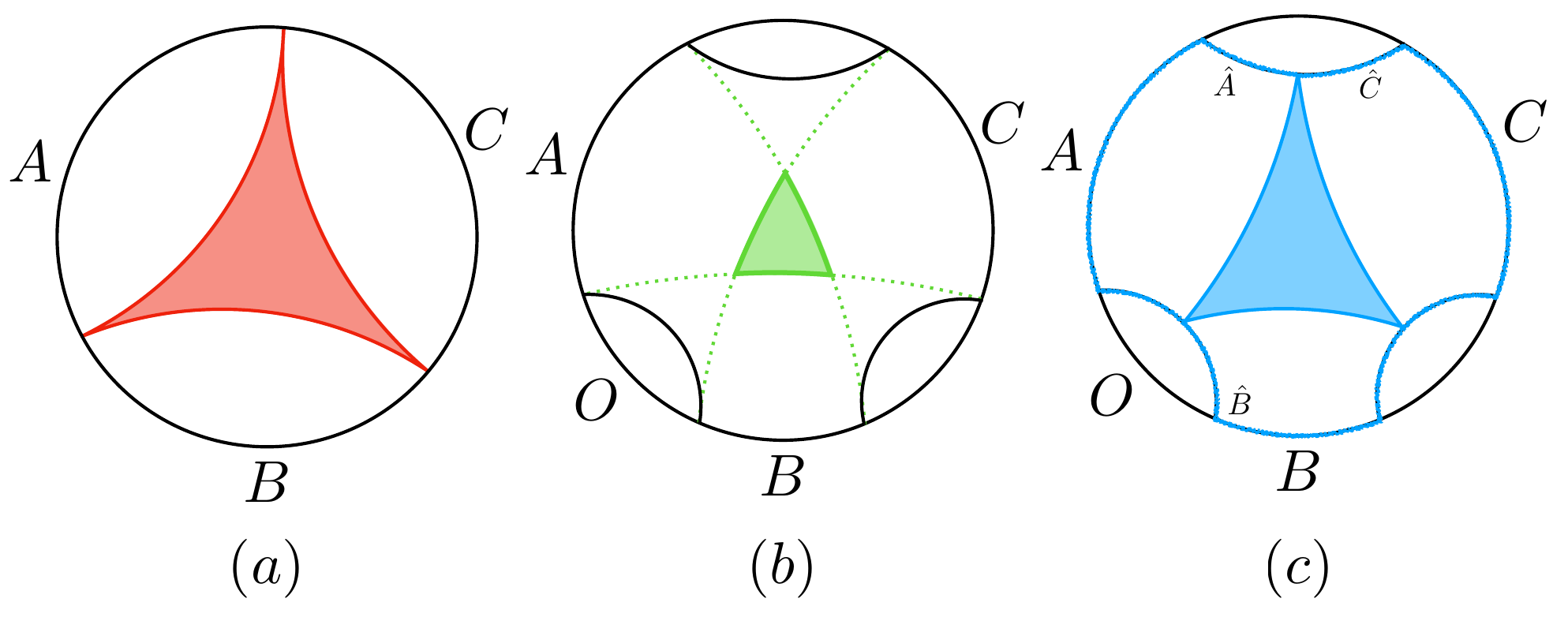}
    \caption{\small{(a) Intersections of bipartite entanglement wedge $AB$, $BC$, $CA$ for a pure holographic state of $ABC$, i.e., $M_{AB}\cap M_{BC}\cap M_{CA}$. (b) Intersections of bipartite entanglement wedge $AB$, $BC$, $CA$ for a mixed holographic state of $ABC$, i.e., $M_{AB}\cap M_{BC}\cap M_{CA}$. (c) A geometry corresponding to an optimally purified holographic state of $ABC$. The edges of the hyperbolic triangle computes $E_w^{(3)}(A:B:C)$.}}
    \label{fig:QECC}
\end{figure}

A holographic quantum error correction code on Figure \ref{fig:QECC} (a) was proposed and studied in \cite{Almheiri2015bulklocality, Harlow2017RTfomulaQEC}. The red region is the intersection of the entanglement wedge $AB$, $BC$, and $AC$. The bulk information within the intersection is non-locally encoded to the boundary. As a result, they are protected under partial tracing a single boundary subregion. This mechanism was demonstrated with a qutrit code.

A similar story can be observed in the case of a mixed state. However, one needs extra care with the different phases of entanglement wedge. We focus on two interesting phases: i) tripartite entanglement wedge $M_{ABC}$ and any bipartite entanglement wedges $M_{AB}$, $M_{BC}$, and $M_{AC}$ are connected, ii) $M_{ABC}$ is connected, but all the bipartite entanglement wedges are disconnected, see figure \ref{fig:ewcs-tri-polygamy} for instance.

For i), as in figure \ref{fig:QECC} (b), the green region is the intersection of the entanglement wedge $AB$, $BC$, and $AC$. The information within the intersection is again non-locally encoded to the boundary subregion $ABC$ when all the bipartite entanglement wedges $M_{AB},M_{BC},M_{CA}$ are connected. 

For ii), consider, for example, the symmetric case studied in section \ref{sec:holography} where the size of the boundary subregions $A$, $B$, and $C$ is equal, and the subregions are placed evenly on the boundary. When the holographic state is in the phase $\alpha^{(2)}\leq \alpha \leq \alpha^{(3)}$, figure \ref{fig:ewcs-tri-polygamy}, any bipartite entanglement wedges no longer exist, although the tripartite entanglement wedge $M_{ABC}$ does. As a result, the holographic code associated with the mixed state of $ABC$ cannot correct against the erasure error of a single boundary region after all the bipartite entanglement wedges become disconnected. In other words, one strictly needs access to all three boundary subregions for the bulk reconstruction of $M_{ABC}$. However, the holographic code associated with the optimally purified state $\ket{\Psi}_{ABCA\hat{A}\hat{B}\hat{C}}$ in figure \ref{fig:QECC} (c) can be correctable under partial tracing $\hat{A}\supset A$, $\hat{B}\supset B$, or $\hat{C}\supset C$. We will explore the properties of a mixed code and purified code in \cite{BFM:preparation}.

A bulk reconstruction for any bulk subregions that are not in contact with the boundary has been found to be possible in the language of von Neumann algebras in \cite{Leutheusser2025subsubduality}. The key relation to the holographic quantum error correction was observed to be the superadditivity, or Haag's duality of bulk and boundary local algebras\cite{Faulkner:2020hzi}. It is a great opportunity to explore how to algebraically extract multipartite entanglement from, for instance, inclusions or intersections of local algebras in a holographic quantum system.

\subsection{More examples}

EoP has been studied in several quantum systems, such as spin systems, free fields, and CFTs\cite{Caputa2019HEoPinCFT,Bhattacharyya2019freefields-spins-EoP,Bhattacharyya2018freescalarEoP,GUO2019134934}. Moreover, extensive studies have been conducted on entanglement of purification and reflected entropy using random tensor networks\cite{AkersEoPinRTNs2024,Akers2022RE-RTNs,Akers2023RE-RTNsII,Akers2024RE-RTNsIII}. It is interesting to calculate $\Delta^{(3)}(A:B:C)$ in these models.   

In addition to our computations for pure AdS$_3$ in section \ref{sec:holography}, it would be interesting to explore the quantity for the case of a BTZ black hole. We conclude the section by briefly commenting on the one-sided BTZ black hole in $2+1$ dimensions.

Its metric is given by
\begin{equation}\label{eq:BTZ-metric}
    ds^2= - \frac{r^2-r_+^2}{L_{AdS}^2}dt^2 +\frac{L_{AdS}^2}{r^2-r_+^2}dr^2+r^2d\phi^2,
\end{equation}
where the temperature $T_H$ is given by $\beta/L_{AdS}=2\pi L_{AdS}/r_+$, and $\beta=1/T_{H}$. There are several possible phases of entanglement wedges. The connected entanglement wedge with non-zero $\Delta^{(3)}(A:B:C)$ is depicted in figure \ref{fig:delta3-ewcs-btz}.

\begin{figure}[h!]
    \centering
    \includegraphics[width=0.4\linewidth]{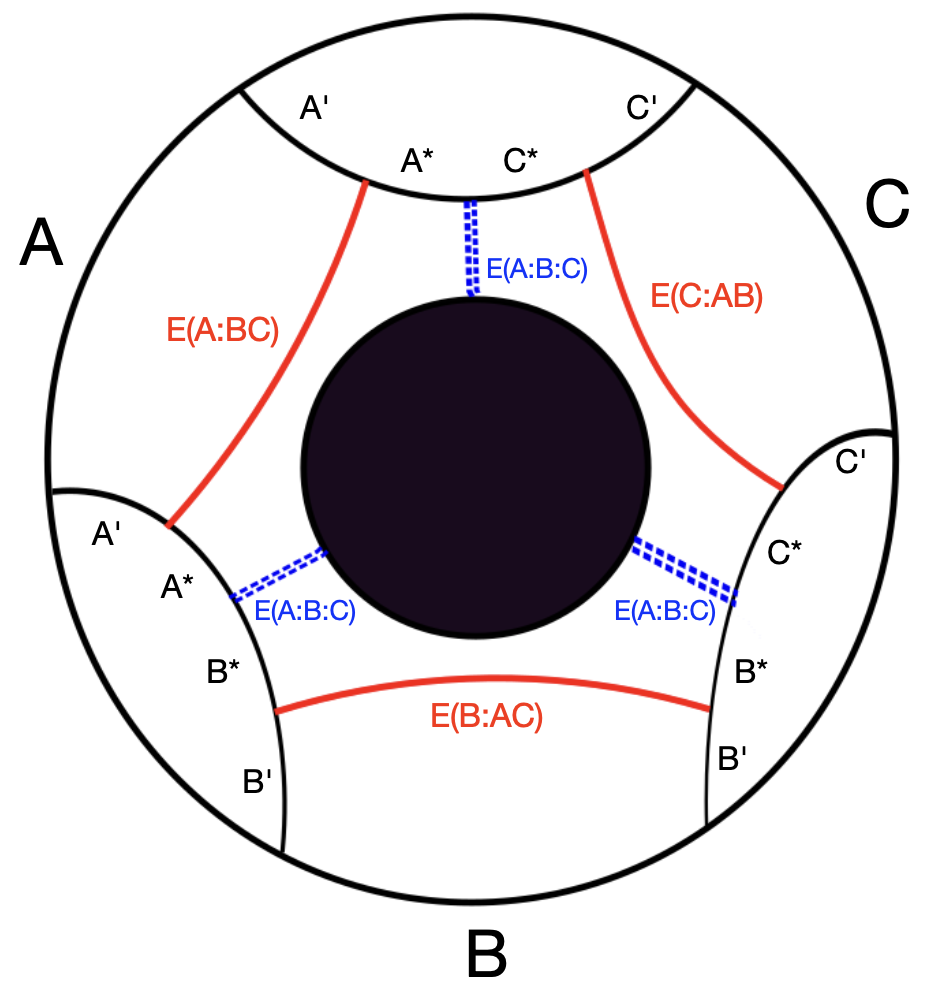}
    \caption{The connected entanglement wedge of $ABC$ in a static time-slice of $AdS_3/CFT_2$. For simplicity, we are considering the symmetric case, where the subregions $A$, $B$ and $C$ have equal angular half-width $\alpha$ (see \ref{eq:regions_ABC}). The red curves denote $E_w^{(2)}(A:BC)$ and permutations, whereas the blue curve denotes $E_w^{(3)}(A:B:C)$. The double lines imply that they are counted twice.}
    \label{fig:delta3-ewcs-btz}
\end{figure}
First, let us make some comments regarding the phases where $\Delta^{(3)}(A:B:C)$ vanishes. We can describe them as follows,
\begin{itemize}
    \item The regions $A$, $B$ and $C$ are small enough such that the entanglement wedge of $ABC$ is disconnected. Analogous to pure $AdS_3$, this corresponds to a separable state.
    \item The blackhole temperature $T_H$ is very large, and thus the horizon radius $r_+$ is also large. In this regime, the optimal purification of $E_{w}^{(2)}$ is the same as that of $E_{w}^{(3)}$, i.e., the regions $A'$ and $A^*$ (and similarly others) coincide. Hence, $\Delta^{(3)}(A:B:C)$ vanishes. In this state, the bipartite correlations between the blackhole and its exterior dominate entirely. 
\end{itemize}

Now we will consider the possible geometries where $\Delta^{(3)}(A:B:C)$ is non-vanishing.
\begin{itemize}
    \item We have not investigated if there is a possibility of an analogous phase similar to the connected phase of pure $AdS_3$, if the blackhole temperature is small enough, just above the threshold. Such a phase (if exists) has non-vanishing $\Delta^{(3)}(A:B:C)$. 
    \item Next we move on to the case, where the temperature is large enough (but not too large), where we have the phase depicted in figure \ref{fig:delta3-ewcs-btz}. We will discuss this phase in more details below.
\end{itemize}

In the figure \ref{fig:delta3-ewcs-btz}, $E_w^{(2)}(A:BC)$ is given by \cite{Nguyen:2017yqw},
\begin{equation}\label{eq:btz-Ep2}
\footnotesize
\begin{aligned}
     & E^{(2)}_w(A:BC) \\ & = \frac{2L_{\text{AdS}}}{4G_N} \log \left\{ \frac{\left[ \sqrt{\sinh\left(\frac{r_+}{2L_{\text{AdS}}}(\phi_1 - \phi_3)\right) \sinh\left(\frac{r_+}{2L_{\text{AdS}}}(\phi_2 - \phi_4)\right)} + \sqrt{\sinh\left(\frac{r_+}{2L_{\text{AdS}}}(\phi_2 - \phi_1)\right) \sinh\left(\frac{r_+}{2L_{\text{AdS}}}(\phi_4 - \phi_3)\right)} \right]^2 }
    {\sinh\left(\frac{r_+}{2L_{\text{AdS}}}(\phi_2 - \phi_3)\right) \sinh\left(\frac{r_+}{2L_{\text{AdS}}}(\phi_1 - \phi_4)\right)} \right\}
\end{aligned}
\end{equation}

where the angles can be parameterized by the half-width $\alpha$ as
\begin{equation}\label{eq:phi-alpha}
    (\phi_1,\phi_2,\phi_3,\phi_4)=\left(2\pi - \alpha, \alpha, \pi - \left(\frac{\pi}{3} + \alpha\right), \pi + \left(\frac{\pi}{3} + \alpha\right)\right).
\end{equation}

To find $E(A:B:C)$, we have to find the length of the dashed blue curve from the turning point of the RT surface (e.g., the point separating $A^*$ and $C^*$) to the blackhole horizon. Let this point have a radial distance $r_*$, then $E(A:B:C)$ is given by \cite{Nguyen:2017yqw}
\begin{equation}
    E^{(3)}_w(A:B:C)=\frac{6L_{AdS}}{4G_N}\log{\left(\frac{r_*}{r_+}+\sqrt{\left(\frac{r_*}{r_+}\right)^2-1}\right)},
\end{equation}
where one can parameterize $r_*$ with $\alpha$. This phase can exist as long as we have the condition $E(A:BC)<\frac{1}{3}E(A:B:C)$, which after switching gives us $\Delta^{(3)}(A:B:C)=0$.

\section*{Acknowledgement}
N.B. thanks Brianna Grado-White, Donald Marolf, and Wayne W. Weng for initial collaboration. We also thank Jacob March and Yikun Jiang for fruitful discussions. N. B. is supported by the DOE Office of Science-ASCR, in particular the grant Novel Quantum Algorithms from Fast Classical Transforms. K.F. is supported by N.B.'s startup fund at Northeastern University. J.N. is supported by the graduate teaching assistantship at Northeastern University.

\appendix

\section{Properies of multipartite entanglement of purification(MEoP)} \label{app:MEoP}

Here, we list the properties of MEoP from \cite{Umemoto-2018-MultipartiteEoP,Bao-2019-conditionalEoP}.

\begin{lemma}[Properties of MoP\cite{Umemoto-2018-MultipartiteEoP,Bao-2019-conditionalEoP}]\
    \begin{enumerate}
    \item If $\rho_{A_1\cdots A_n} = \rho_{A_1 \cdots A_{n-1}}\otimes \rho_{A_n}$, then
    \begin{equation}
        E^{(n)}_p(A_1:\cdots:A_n) = E^{(n-1)}_p(A_1:\cdots:A_{n-1})
    \end{equation}
    \item If $\rho_{A_1\cdots A_n}$ is pure,
    \begin{equation}\label{eq:EoP-pure}
        E^{(n)}_p(A_1:\cdots:A_n) = \sum_{i=1}^nS_{A_i}.
    \end{equation}
    \item 
    \begin{equation}\label{eq:EoP-product}
        E^{(n)}_p(A_1:\cdots:A_n) = 0 \text{ iff } \rho_{A_1\cdots A_n} = \rho_{A_1} \otimes \cdots \otimes \rho_{A_n}
    \end{equation}
    \item Monotonicity under a local partial trace
    \begin{equation}\label{eq:MEoP_monotonicity}
        E^{(n)}_p(A_1:\cdots:A_nB) \geq E^{(n)}_p(A_1:\cdots:A_n)  
    \end{equation}
    \item For any $n$-partite state $\rho_{A_1\cdots A_n}$,
    \begin{equation}\label{eq:n_n-1}
        E^{(n)}_p(A_1:\cdots : A_{n-1} :A_n) \geq E^{(n-1)}_p(A_1:\cdots : A_{n-1}A_n)
    \end{equation}
    \item For any pure $n$-partite density matrices $\rho_{A_1\cdots A_n}$, $E^{(n)}_p$ is polygamous, i.e.,
    \begin{equation}\label{eq:polygamy-MEoP}
        E^{(n)}_p(A_1:\cdots:BC) \leq E^{(n)}_p(A_1:\cdots:B) +E^{(n)}_p(A_1:\cdots:C)
    \end{equation}
    \item The upper bound of MEoP is given by
    \begin{equation}
        E^{(n)}_p \leq min_i \{S_{A_1} + \cdots+ S_{A_1\cdots A_{i-1}A_{i+1}}\cdots A_n + \cdots+ S_{A_n}\}
    \end{equation}
    For instance,
    \begin{equation}
        E^{(3)}_p(A:B:C) \leq min\{A+B+AB,B+C+BC,C+A+CA\}
    \end{equation}

    \item One lower bound of MEoP is given by
    \begin{equation}\label{eq:lowerbound_Epn}
        E^{(n)}_p(A_1:..:A_n) \geq  \frac{1}{2}\sum_{i=1}^n E_p^{(2)}(A_i: A_1 \cdots A_{i-1}A_{i+1}\cdots A_{n})
    \end{equation}

    \item Another lower bound of MEoP is given by
    \begin{equation}\label{eq:lowerbound_2_Epn}
        E^{(n)}_p(A_1:..:A_n) \geq  I(A_1:\cdots:A_n)
    \end{equation}
    where $I(A_1:\cdots:A_n)$ is a multipartite mutual information defined as
    \begin{equation}
        I(A_1:\cdots:A_n) := S(\rho_{A_1\cdots A_n}\| \rho_{A_1} \otimes \cdots \otimes \rho_{A_n}) = \sum_i S_{A_i} - S_{A_1\cdots A_n}
    \end{equation}
    
    \item $E^{(3)}_p(A:B:C)$ is bounded from below by
    \begin{equation}
        E^{(3)}_p(A:B:C) \geq  max\{S_A+S_B+S_C-S_{ABC}, 2(S_A+S_B+S_C)-S_{AB}-S_{BC}-S_{CA}\}.
    \end{equation}

\end{enumerate}
\end{lemma}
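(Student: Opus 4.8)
The plan is to verify the ten listed properties by sorting them into three families according to the tool each one needs: explicit purifications (the upper bounds and the product characterisation), regrouping the purifying systems of an optimal purification together with (strong) subadditivity (the reduction and monotonicity items), and monotonicity of the total correlation plus the standard bipartite-EoP estimates applied to an arbitrary purification (the lower bounds and polygamy). A preliminary fact used throughout is that the minimum in \eqref{eq:E_p_n} is attained, via a Carath\'eodory-type bound on the dimensions of the ancillas $\tilde A_i$ as in \cite{terhalEoP2002}; this matters chiefly where an argument begins ``let $\ket{\psi}$ be an optimal purification.''

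\emph{Upper bounds and the product characterisation (items 2, 3, 7).} If $\rho_{A_1\cdots A_n}$ is pure, every purification factorises as $\ket{\phi}_{A_1\cdots A_n}\otimes\ket{\chi}_{\tilde A_1\cdots\tilde A_n}$, so $\rho_{A_i\tilde A_i}$ is a product and $S_{A_i\tilde A_i}=S_{A_i}+S_{\tilde A_i}\ge S_{A_i}$, with equality for trivial $\ket{\chi}$; this is item 2. For item 7, fixing $i$, taking $\tilde A_j$ trivial for $j\ne i$ and letting $\tilde A_i$ be the canonical purifier of $\rho_{A_1\cdots A_n}$ makes $S_{A_i\tilde A_i}=S_{A_1\cdots A_{i-1}A_{i+1}\cdots A_n}$ by purity, and minimising over $i$ gives the claimed bound. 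For item 3, ``$\Leftarrow$'' is the product purification (all $S_{A_i\tilde A_i}=0$), while ``$\Rightarrow$'' uses attainment: $E^{(n)}_p=0$ forces an optimal purification with every $S_{A_i\tilde A_i}=0$, hence every $A_i\tilde A_i$ pure, hence the global state factorises across the blocks $A_i\tilde A_i$, and tracing out the ancillas gives $\rho_{A_1\cdots A_n}=\bigotimes_i\rho_{A_i}$.

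\emph{Reduction and monotonicity (items 1, 4, 5).} The recurring move is to start from an optimal purification of the finer partition and \emph{absorb} some systems into a neighbouring purifier, producing a (generally sub-optimal) purification of the coarser partition whose cost is no larger. For item 4, absorbing the extra factor $B$ into $\tilde A_n$ turns the $n$-th cost into $S_{A_nB\tilde A_n}$, which is exactly the last term of the larger quantity. For item 5, merging $A_{n-1}$ and $A_n$ into one party with purifier $\tilde A_{n-1}\tilde A_n$ and applying subadditivity, $S_{A_{n-1}A_n\tilde A_{n-1}\tilde A_n}\le S_{A_{n-1}\tilde A_{n-1}}+S_{A_n\tilde A_n}$, yields $E^{(n-1)}_p(A_1:\cdots:A_{n-1}A_n)\le E^{(n)}_p(A_1:\cdots:A_n)$. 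Item 1 then splits: ``$\ge$'' follows by chaining items 5 and 4 (since $A_{n-1}A_n\supseteq A_{n-1}$) and holds for every state, while ``$\le$'' comes from tensoring an optimal $(n-1)$-partite purification of $\rho_{A_1\cdots A_{n-1}}$ with a pure purification of the product factor $\rho_{A_n}$, whose extra cost is zero.

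\emph{Lower bounds, polygamy, and the hard part (items 6, 8, 9, 10).} For item 9, monotonicity of the total correlation under partial trace — a consequence of strong subadditivity — gives $\sum_i S_{A_i\tilde A_i}=I(A_1\tilde A_1:\cdots:A_n\tilde A_n)\ge I(A_1:\cdots:A_n)$ for every purification (the first equality by purity), so minimising gives $E^{(n)}_p\ge I(A_1:\cdots:A_n)$. For item 8, an optimal $n$-partite purification is in particular a bipartite purification of each split ``$A_i$ versus the rest,'' hence $S_{A_i\tilde A_i}\ge E_p(A_i:A_1\cdots A_{i-1}A_{i+1}\cdots A_n)=\tfrac12 E^{(2)}_p(A_i:A_1\cdots A_{i-1}A_{i+1}\cdots A_n)$, and summing over $i$ gives \eqref{eq:lowerbound_Epn}. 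Item 10's first term is item 9, and its second term follows by inserting the bipartite estimate of \eqref{eq:uppr-lower-bounds-2}, $E^{(2)}_p(A:BC)\ge I(A:B)+I(A:C)$, into item 8 with $n=3$. For polygamy (item 6) on a pure state, item 2 makes the left side $\sum_i S_{A_i}+S_{BC}$, while item 9 — using $S_{A_1\cdots A_{n-1}B}=S_C$ and $S_{A_1\cdots A_{n-1}C}=S_B$ by purity — bounds the right side below by $2\sum_i S_{A_i}$, and subadditivity together with purity gives $S_{BC}=S_{A_1\cdots A_{n-1}}\le\sum_i S_{A_i}$, which closes the inequality. I expect the only genuinely nontrivial inputs to be the attainment of the minimum in \eqref{eq:E_p_n} (tacitly used by every ``take an optimal purification'' step, and essential for item 3) and the bipartite estimate $E^{(2)}_p(A:BC)\ge I(A:B)+I(A:C)$ feeding the sharp tripartite bound — the single place where a real property of EoP, rather than just subadditivity of von Neumann entropy and total-correlation monotonicity, is invoked, and which must be imported from \cite{terhalEoP2002} or reproved via a monogamy-type argument for the optimal bipartite purification.
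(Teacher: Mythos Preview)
The paper does not supply its own proof of this lemma: Appendix~\ref{app:MEoP} merely lists the ten properties and attributes them to \cite{Umemoto-2018-MultipartiteEoP,Bao-2019-conditionalEoP}. So there is no in-paper argument to compare against; what can be compared is your sketch against the standard proofs in those references, and on that count your proposal is correct and essentially reproduces them. The three-bucket organisation (explicit purifications for items 2,~3,~7; regrouping purifiers plus (strong) subadditivity for items 1,~4,~5; total-correlation monotonicity and the bipartite-EoP lower bound for items 6,~8,~9,~10) is exactly the logic of \cite{Umemoto-2018-MultipartiteEoP}, and your derivation of each step is sound.

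Two small remarks. First, your proof of polygamy (item~6) via item~9 plus subadditivity is slightly more indirect than the argument in \cite{Umemoto-2018-MultipartiteEoP}, which writes both sides explicitly using item~2 for the pure left-hand side and the upper bound of item~7 applied to each right-hand term; both routes work, and yours has the virtue of using only lower bounds on the right. Second, the inequality $E_p(A:BC)\ge\tfrac12\big(I(A:B)+I(A:C)\big)$ feeding item~10 is not in \cite{terhalEoP2002}; it is the Bagchi--Pati bound, imported in \cite{Umemoto-2018-MultipartiteEoP} and used in the present paper as the lower estimate in \eqref{eq:uppr-lower-bounds-2}. You correctly flag it as the one genuinely nontrivial external input, just with a slightly off citation.
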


\section{Four partite correlation signal $\Delta_p^{(4)}$}\label{app:delta4p}

In this section, we prove the properties of $\Delta^{(4)}_p(A:B:C:D)$, listed in proposition \ref{prop:delta4p}, in the subsequent subsections. In the last two subsections, we study the properties of $\Delta^{(4)}_p(A:B:C:D)$ for GHZ$_N$ states for $N\geq 4$ and four-partite holographic pure state.


\subsection{Coefficients}
In general, $\Delta^{(4)}(A:B:C:D)$ can be written with coefficients as follows.
\begin{equation}\label{eq:candidate_delta4p}
\begin{split}
    \Delta^{(4)}&(A:B:C:D) \\
    =& E^{(4)}_p(A:B:C:D) - e^{(3)} \mbE^{(3)}_p(2:1:1) - e^{(2)}_1\mbE^{(2)}_p(1:3) - e^{(2)}_2\mbE^{(2)}_p(2:2) \\
    =& E^{(4)}_p(A:B:C:D) - d^{(3)} \boldsymbol{\Delta}^{(3)}_p(2:1:1) - d^{(2)}_1 \boldsymbol{\Delta}^{(2)}_p(1:3) - d^{(2)}_2\boldsymbol{\Delta}^{(2)}_p(2:2).\\
\end{split}
\end{equation}
Here, we introduced the simplified notations, i.e.,
\begin{equation}\label{eq:notation_Ep_combinations}
\begin{split}
    \mbE^{(3)}_p(2:1:1) :=& E^{(3)}_p(AB:C:D) +E^{(3)}_p(AC:B:D)+E^{(3)}_p(AD:B:C) \\
    &+E^{(3)}_p(BC:A:D)+E^{(3)}_p(BD:A:C)+E^{(3)}_p(CD:A:B),\\
    \mbE^{(2)}_p(1:3) := & E^{(2)}_p(A:BCD) + E^{(2)}_p(B:ACD) + E^{(2)}_p(C:ABD) + E^{(2)}_p(D:ABC),\\
    \mbE^{(2)}_p(2:2) := &E^{(2)}_p(AB:CD) + E^{(2)}_p(AC:BD) +E^{(2)}_p(AD:BC),\\
\end{split}
\end{equation}
and 
\begin{equation}
\begin{split}
    \boldsymbol{\Delta}^{(3)}_p(2:1:1) :=& \Delta^{(3)}_p(AB:C:D) +\Delta^{(3)}_p(AC:B:D)+\Delta^{(3)}_p(AD:B:C) \\
    & + \Delta^{(3)}_p(BC:A:D)+\Delta^{(3)}_p(BD:A:C)+\Delta^{(3)}_p(CD:A:B),\\
    \boldsymbol{\Delta}^{(2)}_p(1:3) := & \Delta^{(2)}_p(A:BCD) + \Delta^{(2)}_p(B:ACD) + \Delta^{(2)}_p(C:ABD) + \Delta^{(2)}_p(D:ABC),\\
    \boldsymbol{\Delta}^{(2)}_p(2:2) := &\Delta^{(2)}_p(AB:CD) + \Delta^{(2)}_p(AC:BD) +\Delta^{(2)}_p(AD:BC).\\
\end{split}
\end{equation}
The coefficients are related by
\begin{equation}\label{eq:coefficients-e-d}
    e^{(3)} = d^{(3)},\; e^{(2)}_1 = d^{(2)}_1 - \frac{3}{2}d^{(3)},\; e^{(2)}_2 = d^{(2)}_2- d^{(3)}
\end{equation}

We determine the coefficients $\{e^{(3)}, e^{(2)}_1,e^{(2)}_2\}$ or $\{d^{(3)}, d^{(2)}_1,d^{(2)}_2\}$ by requiring
\begin{equation}\label{eq:coefficient_condition_delta_4}
    \Delta^{(4)}_p(A:B:C:D) = 0 
\end{equation}
for the quantum states that only have $k$-partite entanglement for $k=2,3$, such as $\rho_{ABCD} = \rho_{AB}\otimes \rho_C \otimes \rho_D$, $\rho_{ABCD} = \rho_{ABC}\otimes \rho_D$, $\rho_{ABCD} = \rho_{AB}\otimes \rho_{CD}$, and others with any permutations among $A,B,C,D$. We write down $\Delta^{(4)}_p(A:B:C:D)$ with the coefficients as a definition below.

\begin{definition}[A candidate four-partite correlation signal for a density matrix]\label{def:delta_p_4}

    For a mixed density matrix $\rho_{ABCD}$, $\Delta^{(4)}_p(A:B:C:D)$ is defined as
    \begin{equation}\label{eq:delta_p_4}
    \begin{split}
        \Delta^{(4)}_p&(A:B:C:D) \\
        :=& E^{(4)}_p(A:B:C:D)\\
        &\quad - \frac{1}{3} \{E^{(3)}_p(AB:C:D) +E^{(3)}_p(AC:B:D)+E^{(3)}_p(AD:B:C) \\
        &\qquad \qquad+E^{(3)}_p(BC:A:D)+E^{(3)}_p(BD:A:C)+E^{(3)}_p(CD:A:B)\} \\
        &\quad + \frac{1}{6}\{E^{(2)}_p(A:BCD) + E^{(2)}_p(B:ACD) + E^{(2)}_p(C:ABD) + E^{(2)}_p(D:ABC)\}\\
        &\quad + \frac{1}{6}\{E^{(2)}_p(AB:CD) + E^{(2)}_p(AC:BD) +E^{(2)}_p(AD:BC)\} \\
        :=& E^{(4)}_p(A:B:C:D) \\
        &\quad - \frac{1}{3} \{\Delta^{(3)}_p(AB:C:D) +\Delta^{(3)}_p(AC:B:D)+\Delta^{(3)}_p(AD:B:C) \\
        &\qquad \qquad+\Delta^{(3)}_p(BC:A:D)+\Delta^{(3)}_p(BD:A:C)+\Delta^{(3)}_p(CD:A:B)\} \\
        &\quad - \frac{1}{3}\{\Delta^{(2)}_p(A:BCD) + \Delta^{(2)}_p(B:ACD) + \Delta^{(2)}_p(C:ABD) + \Delta^{(2)}_p(D:ABC)\}\\
        &\quad - \frac{1}{6}\{\Delta^{(2)}_p(AB:CD) + \Delta^{(2)}_p(AC:BD) +\Delta^{(2)}_p(AD:BC)\}.\\
    \end{split}
    \end{equation}
    
\end{definition}

\begin{proof}[Derivation]
For $\rho_{ABCD} = \rho_{AB}\otimes \rho_C \otimes \rho_D$, the relevant entropies are
\begin{align}
\begin{split}
    S_{AB\tilde{A}\tilde{B}}&=S_{CD\tilde{C}\tilde{D}}=S_{C\tilde{C}} = S_{D\tilde{D}} = 0,\\
    S_{AC\tilde{A}\tilde{C}} &=S_{AD\tilde{A}\tilde{D}} = S_{BC\tilde{B}\tilde{C}} =S_{BD\tilde{B}\tilde{D}}=S_{A\tilde{A}}= S_{B\tilde{B}}\\
\end{split}
\end{align}
Then, we have\footnote{We denote $\underset{\ket{\psi}_{ABCD\tilde{A}\tilde{B}\tilde{C}\tilde{D}}}{min}$ as $min$ for brevity in this derivation.}
\begin{align}
\begin{split}
    E^{(4)}_p (A:B:C:D) &= min [S_{A\tilde{A}} + S_{B\tilde{B}}], \\
    \mbE^{(3)}_p(2:1:1)  &= 5 min [S_{A\tilde{A}} + S_{B\tilde{B}}],\\
    \mbE^{(2)}_p(1:3) &= 2min[S_{A\tilde{A}} + S_{B\tilde{B}}],\\
    \mbE^{(2)}_p(2:2)  &= 2min[S_{A\tilde{A}} + S_{B\tilde{B}}].\\
\end{split}
\end{align}
From \eqref{eq:candidate_delta4p},
\begin{equation}\label{eq:delta4_bipartite}
    \Delta^{(4)}_p(A:B:C:D) = (1-5e^{(3)} - 2e^{(2)}_1 - 2e^{(2)}_2)min[S_{A\tilde{A}} + S_{B\tilde{B}}]. 
\end{equation}
Similarly for the case of $\rho_{ABC}\otimes \rho_D$, we obtain
\begin{align}
\begin{split}
S_{ABC\tilde{A}\tilde{B}\tilde{C}}&=S_{D\tilde{D}} = 0,\;S_{AD\tilde{A}\tilde{D}} = S_{A\tilde{A}},\;S_{BD\tilde{B}\tilde{D}} = S_{B\tilde{B}}, \;S_{CD\tilde{C}\tilde{D}} = S_{C\tilde{C}},\\
\end{split}
\end{align}
and 
\begin{align}
\begin{split}
    E^{(4)}_p (A:B:C:D) &= min [S_{A\tilde{A}} + S_{B\tilde{B}}+S_{C\tilde{C}}], \\
    \mbE^{(3)}_p(2:1:1)  &= 3 min [S_{A\tilde{A}} + S_{B\tilde{B}}+ S_{C\tilde{C}}] + min[2S_{A\tilde{A}}] +min[2S_{B\tilde{B}}]+min[2S_{C\tilde{C}}] ,\\
    \mbE^{(2)}_p(1:3) &= \mbE^{(2)}_p(2:2)  = min[2S_{A\tilde{A}}] +min[2S_{B\tilde{B}}]+min[2S_{C\tilde{C}}] .\\
\end{split}
\end{align}
Then,
\begin{align}\label{eq:delta4_tripartite}
\begin{split}
    \Delta^{(4)}_p&(A:B:C:D) \\
    &= (1-3e^{(3)}) min[S_{A\tilde{A}} + S_{B\tilde{B}}+S_{C\tilde{C}}] \\
    &+(-e^{(3)}-e^{(2)}_1-e^{(2)}_2) \{min[2S_{A\tilde{A}}] +min[2S_{B\tilde{B}}]+min[2S_{C\tilde{C}}]\}.
\end{split}
\end{align}
For the third case $\rho_{ABC}= \rho_{AB}\otimes \rho_{CD}$, we have
\begin{align}
\begin{split}
S_{AB\tilde{A}\tilde{B}}&=S_{CD\tilde{C}\tilde{D}} = 0,\\
S_{AC\tilde{A}\tilde{C}}&=S_{BC\tilde{B}\tilde{C}}= S_{AD\tilde{A}\tilde{D}} =S_{BD\tilde{B}\tilde{D}} =S_{A\tilde{A}}+ S_{C\tilde{C}} =S_{B\tilde{B}}+S_{D\tilde{D}},\\
\end{split}
\end{align}
where we used the purification symmetries $S_{A\tilde{A}}=S_{B\tilde{B}}$ and $S_{C\tilde{C}} =S_{D\tilde{D}}$. The relevant MEoPs can be computed as 
\begin{align}
\begin{split}
    E^{(4)}_p (A:B:C:D) &= min [S_{A\tilde{A}}+S_{B\tilde{B}} +  S_{C\tilde{C}}+S_{D\tilde{D}}], \\
    \mbE^{(3)}_p(2:1:1)  &= 4min [S_{A\tilde{A}}+S_{B\tilde{B}} +  S_{C\tilde{C}}+S_{D\tilde{D}}] \\
    &\quad + min [S_{A\tilde{A}}+S_{B\tilde{B}}]+ min [S_{C\tilde{C}}+S_{D\tilde{D}}],\\
    \mbE^{(2)}_p(1:3) &= 2min [S_{A\tilde{A}}+S_{B\tilde{B}}]+ 2min [S_{C\tilde{C}}+S_{D\tilde{D}}]\\
    \mbE^{(2)}_p(2:2)  &= 2min [S_{A\tilde{A}}+S_{B\tilde{B}} +  S_{C\tilde{C}}+S_{D\tilde{D}}]  .\\
\end{split}
\end{align}
Then, 
\begin{align}\label{eq:delta4_bipartite_bipartite}
\begin{split}
    \Delta^{(4)}_p&(A:B:C:D) \\
    &= (1-4e^{(3)} - 2e^{(2)}_2) min[S_{A\tilde{A}}+S_{B\tilde{B}} +  S_{C\tilde{C}}+S_{D\tilde{D}}] \\
    &+(-e^{(3)}-2e^{(2)}_1) \{min[S_{A\tilde{A}}+S_{B\tilde{B}}] +min[S_{C\tilde{C}}+S_{D\tilde{D}}]\}.
\end{split}
\end{align}
In summary, by imposing the condition \eqref{eq:coefficient_condition_delta_4} to \eqref{eq:delta4_bipartite}, \eqref{eq:delta4_tripartite}, and \eqref{eq:delta4_bipartite_bipartite}, we have, 
\begin{itemize}
    \item for $\rho_{ABCD} = \rho_{AB}\otimes \rho_C \otimes \rho_D$,
    \begin{equation}\label{eq:condition_delta4_1-1}
        (1-5e^{(3)} - 2e^{(2)}_1 - 2e^{(2)}_2)min[S_{A\tilde{A}} + S_{B\tilde{B}}] = 0,
    \end{equation}
    
    \item for $\rho_{ABCD} = \rho_{ABC}\otimes \rho_D$,
    \begin{equation}\label{eq:condition_delta4_2-1}
    \begin{split}
        &(1-3e^{(3)}) min[S_{A\tilde{A}} + S_{B\tilde{B}}+S_{C\tilde{C}}] \\
        &+(-e^{(3)}-e^{(2)}_1-e^{(2)}_2) \{min[2S_{A\tilde{A}}] +min[2S_{B\tilde{B}}]+min[2S_{C\tilde{C}}]\}=0,
    \end{split}
    \end{equation}
    
    \item and for $\rho_{ABCD} = \rho_{AB}\otimes \rho_{CD}$,
    \begin{equation}\label{eq:condition_delta4_3-1}
    \begin{split}
        &(1-4e^{(3)} - 2e^{(2)}_2) min[S_{A\tilde{A}}+S_{B\tilde{B}} +  S_{C\tilde{C}}+S_{D\tilde{D}}] \\
        &+(-e^{(3)}-2e^{(2)}_1) \{min[S_{A\tilde{A}}+S_{B\tilde{B}}] +min[S_{C\tilde{C}}+S_{D\tilde{D}}]\} = 0.
    \end{split}
    \end{equation}

\end{itemize}

The first equation \eqref{eq:condition_delta4_1-1} becomes 
\begin{equation}\label{eq:condition_delta4_1-2}
    (1-5e^{(3)} - 2e^{(2)}_1 - 2e^{(2)}_2) = 0
\end{equation}
because, in general, $min[S_{A\tilde{A}} + S_{B\tilde{B}}]\geq 0$ by the assumption $\rho_{ABCD} = \rho_{AB}\otimes \rho_C\otimes \rho_D$. The second equation \eqref{eq:condition_delta4_2-1} can be reorganized into 
\begin{equation}\label{eq:condition_delta4_2-2}
\begin{split}
    &(1-3e^{(3)}) \Delta^{(3)}_p(A:B:C) \\
    &+(\frac{1}{2}-\frac{5}{2}e^{(3)}-e^{(2)}_1-e^{(2)}_2) \{min[2S_{A\tilde{A}}] +min[2S_{B\tilde{B}}]+min[2S_{C\tilde{C}}]\}=0.
\end{split}
\end{equation}
Here, we applied that
\begin{equation}
    \Delta^{(3)}_p(A:B:CD) =\Delta^{(3)}_p(A:B:C)
\end{equation}
for $\rho_{ABCD}=\rho_{ABC}\otimes \rho_D$. From \eqref{eq:condition_delta4_1-2}, the equation \eqref{eq:condition_delta4_2-2} becomes 
\begin{equation}\label{eq:condition_delta4_2-3} 
    (1-3e^{(3)}) \Delta^{(3)}_p(A:B:C) = 0.
\end{equation}
Because $\Delta^{(3)}_p(A:B:C) \neq 0 $ by the corresponding assumption $\rho_{ABC} =\rho_{ABC}\otimes \rho_D$, we get
\begin{equation}
    e^{(3)} = \frac{1}{3}.
\end{equation}
Together with the above and \eqref{eq:condition_delta4_1-2} implies that
\begin{equation}
    e^{(2)}_1+e^{(2)}_2 = -\frac{1}{3}.
\end{equation}
Inserting the above equation for $e^{(2)}_1$ and $e^{(2)}_2$ into the third equation \eqref{eq:condition_delta4_3-1} provides
\begin{equation}
\begin{split}
    &(\frac{1}{3}+2e^{(2)}_1) \\
    &\times \{ min[S_{A\tilde{A}}+S_{B\tilde{B}} +  S_{C\tilde{C}}+S_{D\tilde{D}}] -
        min[S_{A\tilde{A}}+S_{B\tilde{B}}] -min[S_{C\tilde{C}}+S_{D\tilde{D}}]\} = 0.
\end{split}
\end{equation}
In general, we have 
\begin{equation}
    min[S_{A\tilde{A}}+S_{B\tilde{B}} +  S_{C\tilde{C}}+S_{D\tilde{D}}] \geq
        min[S_{A\tilde{A}}+S_{B\tilde{B}}] +min[S_{C\tilde{C}}+S_{D\tilde{D}}].
\end{equation}
Hence, without loss of generality, we get
\begin{equation}
    e^{(2)}_1 = -\frac{1}{6}.
\end{equation}
Therefore, 
\begin{equation}\label{eq:coefficients_es}
    e^{(3)}=\frac{1}{3}, \; e^{(2)}_1=-\frac{1}{6},\;e^{(2)}_2=-\frac{1}{6},
\end{equation}
or
\begin{equation}\label{eq:coefficients_ds}
    d^{(3)}=\frac{1}{3}, \; d^{(2)}_1=\frac{1}{3},\;e^{(2)}_2=\frac{1}{6}.
\end{equation}
\end{proof}


\subsection{Purity}

As discussed above, $\Delta_p^{(4)}(A:B:C:D) = 0$ for any density matrices without four-partite entanglement. If $\rho_{ABCD}$ is a pure state, by applying \eqref{eq:EoP-pure} to $\Delta_p^{(4)}(A:B:C:D)$, we find that it does not necessarily vanish, i.e.,
\begin{equation}
    \Delta_p^{(4)}(A:B:C:D) = \frac{1}{3} \{S_A+S_B+S_C+S_D- (S_{AB}+S_{AC}+S_{AD})\} \neq 0 .
\end{equation}

\subsection{Example: $GHZ_N$}

Consider $\ket{GHZ_N}$ for $N\geq 4$ in \eqref{eq:GHZn}, $E^{(4)}_p(A:B:C:D)$ cannot be specified by its upper and lower bound, in contrast to $E^{(2)}_p(A:BC)$ and $E^{(3)}_p(A:B:C)$ in \eqref{eq:GHZ_N_EPs}, since they do not match, i.e.,
\begin{equation}
    7\ln 2 \geq E^{(4)}_p(A:B:C:D) \geq 3 \ln 2.
\end{equation}
With a simple calculation, we have 
\begin{equation}
    \mbE^{(3)}_p(2:1:1) = 6 \times 3\ln 2,\; \mbE^{(2)}_p(1:3) = 4 \times 2\ln 2,\;\mbE^{(2)}_p(2:2) = 3 \times 2\ln 2.
\end{equation}
Then, we find
\begin{equation}
    \Delta^{(4)}_p(A:B:C:D) = E^{(4)}_p(A:B:C:D) - \frac{11}{6} \ln 2.
\end{equation}
Hence, it is simple to see that $\Delta^{(4)}_p(A:B:C:D)\geq 0$ for $\ket{GHZ_N}$ for $N\geq 4$. 

Moreover, if $\rho_{ABCD}=\ket{GHZ_N}\bra{GHZ_N}$ is pure and   $|A|+|B|+|C|+|D|=N\geq 4$, we can compute $\Delta^{(4)}_p(A:B:C:D)$ from \eqref{eq:delta_p4_pure_state}. Since the entanglement entropies of reduced states of any subsystem size of $\ket{GHZ_N}$ is constant, 
\begin{equation}
    S_A =S_B=S_C =S_D=S_{AB} =S_{AC}=S_{AD} =S_{BC}=S_{BD}=S_{CD} = \ln 2,
\end{equation}
we have
\begin{equation}
    \Delta^{(4)}_p(A:B:C:D) = \frac{1}{3}\ln2.
\end{equation}
In contrast to $\Delta^{(3)}(A:B:C)=0$ for a pure state $\rho_{ABC}=\ket{GHZ_N}\bra{GHZ_N}$ and $|A|+|B|+|C|=N$ for any $N$, $\Delta^{(4)}_p(A:B:C:D)$ can capture $GHZ_N$ entanglement for $N\geq 4$. \newline

\subsection{Example: Four-partite holographic pure states}

We saw in proposition \ref{prop:delta4p} that, for a four-partite holographic pure state, $\Delta^{(4)}_p(A:B:C:D)$ is proportional to tripartite information $I_3(X:Y:Z)$. Tripartite information $I_3(X:Y:Z)$ is negative for any holographic states because they satisfy the monogamy of mutual information, e.g.,
\begin{equation}
    S_{XY} +S_{XZ} + S_{YZ} \geq S_{X} + S_{Y}+S_{Z} +S_{XYZ}.
\end{equation}

Hence, we have 
\begin{equation}
    \Delta^{(4)}_p(A:B:C:D) = \frac{1}{3} I_3(X:Y:Z) \leq 0 
\end{equation}
for $X,Y,Z = \{A,B,C,D\}$.

\section{Comments on classical correlations}

In this paper, we did not discuss when a quantum state has classical correlations, e.g., $\rho_{ABC} = \sum_i p_i \rho_{ABC,i}$. As we saw in proposition \ref{prop:delta_p_3_properties}, we confirmed that $\Delta^{(3)}_p(A:B:C) = 0$ for $\rho_{AB}\otimes \rho_C$. In this section, we first show that $\Delta^{(3)}_p(A:B:C) \neq 0$ when classical correlations are present, i.e.,
\begin{equation}\label{eq:classical-bipartite-entanglement}
    \rho_{ABC} = \sum_i p_i \rho_{AB,i}\otimes \rho_{C_i}.
\end{equation}
Then, we devise an alternative quantity $\hat{\Delta}^{(3)}_p(A:B:C)$, which vanishes for the quantum state \eqref{eq:classical-bipartite-entanglement}. However, we do not have proof for $\hat{\Delta}^{(3)}_p(A:B:C) \geq 0$ for any quantum state.

First, for the quantum state \eqref{eq:classical-bipartite-entanglement}, we consider some purified state of $\rho_{ABC}$,
\begin{equation}
    \ket{\psi_{ABC}}\bra{\psi_{ABC}} = \sum_i p_i \ket{\psi_{AB\tilde{A}\tilde{B},i}}\bra{\psi_{AB\tilde{A}\tilde{B},i}} \otimes \ket{\phi_{C\tilde{C},i}}\bra{\phi_{C\tilde{C},i}},    
\end{equation}
such that 
\begin{equation}\label{eq:orthogonality}
    \braket{\phi_{C\tilde{C},i}|\phi_{C\bar{C},j}} = \delta_{ij} \text{ and }\braket{\psi_{AB\tilde{A}\tilde{B},i}|\psi_{AB\tilde{A}\tilde{B},j}} = \delta_{ij}.
\end{equation}
    
Here, $\tilde{A}\tilde{B}\tilde{C}$ is the purifier of $ABC$. By the purity, the von Neumann entropies $S_{A\tilde{A}},S_{B\tilde{B}}, S_{BC\tilde{B}\tilde{C}}, S_{AC\tilde{A}\tilde{C}}$ of the reduced states $\sigma_{A\tilde{A}}, \sigma_{B\tilde{B}},\sigma_{BC\tilde{B}\tilde{C}}, \sigma_{AC\tilde{A}\tilde{C}}$ satisfy
\begin{equation}
    S_{A\tilde{A}} = S_{BC\tilde{B}\tilde{C}},\; S_{B\tilde{B}} =S_{AC\tilde{A}\tilde{C}}
\end{equation}
The reduced states, $\sigma_{BC\tilde{B}\tilde{C}}, \sigma_{AC\tilde{A}\tilde{C}}$ can be written as 
\begin{equation}
    \sigma_{BC\tilde{B}\tilde{C}} = \sum_{i} p_i \sigma_{B\tilde{B},i} \otimes \ket{\phi_{C\tilde{C},i}}\bra{\phi_{C\tilde{C},i}},\; \sigma_{AC\tilde{A}\tilde{C}} = \sum_{i} p_i \sigma_{A\tilde{A},i} \otimes \ket{\phi_{C\tilde{C},i}}\bra{\phi_{C\tilde{C},i}}.
\end{equation}
Together with the orthogonality \eqref{eq:orthogonality}, we have \begin{equation}
\begin{split}
    S_{A\tilde{A}}&=S_{BC\tilde{B}\tilde{C}} = \sum_i p_i S_{B\tilde{B},i} + H(\{p\}_\rho),\\ S_{B\tilde{B}} &= S_{AC\tilde{A}\tilde{C}}  =\sum_i p_i S_{A\tilde{A},i} + H(\{p\}_\rho),\\ S_{C\tilde{C}} &= S_{AB\tilde{A}\tilde{B}}  = H(\{p\}_\rho)\\
\end{split}
\end{equation}
where $H(\{p\}_\rho)$ is the Shannon entropy of the probability distribution $\{p\}_\rho:= \{p_i, \forall i\}$.
The bipartite and tripartite EoP can be computed as\footnote{We denote $\underset{\ket{\psi}_{ABC\tilde{A}\tilde{B}\tilde{C}}}{min}$ as $min$ for simplicity in this section.}
\begin{equation}
\begin{split}
    E^{(2)}_p(A:BC) &= 2 min \Big[\sum_i p_i S_{B\tilde{B},i} \Big]+ 2H(\{p\}_\rho),\\
    E^{(2)}_p(B:AC) &= 2 min \Big[\sum_i p_i S_{A\tilde{A},i}\Big]+ 2H(\{p\}_\rho),\\
    E_p^{(2)}(C:AB)&=2H(\{p\}),\\
\end{split}
\end{equation}
and
\begin{equation}
    E_p^{(3)}(A:B:C)= min\Big[\sum_i p_i S_{A\tilde{A},i} +\sum_i p_i S_{B\tilde{B},i}\Big]+ 3H(\{p\}_\rho).
\end{equation}
Here, we used the fact that the Shannon entropies $H(\{p\}_\rho)$ are independent of minimization. Hence, we have
\begin{equation}
    \Delta^{(3)}_p(A:B:C) = min\Big[\sum_i p_i S_{A\tilde{A},i}+ \sum_i p_i S_{B\tilde{B},i} \Big] -  min\Big[\sum_i p_i S_{A\tilde{A},i}\Big] - min\Big[\sum_i p_i S_{B\tilde{B},i} \Big] \geq 0,
\end{equation}
which is positive semidefinite. Thus, $\Delta^{(3)}_p(A:B:C)$ does not vanish if the classical correlations exist in a quantum state.

Now, we briefly study an experimental quantity that vanishes for the quantum state \eqref{eq:classical-bipartite-entanglement}. We define the quantity as
\begin{equation}
\begin{split}
    \hat{\Delta}^{(3)}_p(A:B:C) :=& E^{(3)}_p(A:B:C) + \frac{1}{2}\{E^{(2)}_p(A:BC)+E^{(2)}_p(B:AC)+ E_p^{(2)}(C:AB)\}\\
    &-\{E^{(3)}_p(A:B:\slashed{C}) + E^{(3)}_p(A:\slashed{B}:C)+E^{(3)}_p(\slashed{A}:B:C)\}.\\
\end{split}
\end{equation}
Here, we introduced
\begin{equation}
\begin{split}
    E^{(3)}_p(A:B:\slashed{C}) :=& \underset{\ket{\psi}_{ABC\tilde{A}\tilde{B}\tilde{C}}}{min} [S_{A\tilde{A}} + S_{B\tilde{B}}],\\
    E^{(3)}_p(A:\slashed{B}:C) :=& \underset{\ket{\psi}_{ABC\tilde{A}\tilde{B}\tilde{C}}}{min} [S_{A\tilde{A}} + S_{C\tilde{C}}],\\
    E^{(3)}_p(\slashed{A}:B:C) :=& \underset{\ket{\psi}_{ABC\tilde{A}\tilde{B}\tilde{C}}}{min} [S_{B\tilde{B}} + S_{C\tilde{C}}].\\
\end{split}
\end{equation}
These quantities satisfy, e.g.,
\begin{equation}
    E^{(3)}_p(A:B:C) \geq E^{(3)}_p(A:B:\slashed{C}) \geq E^{(2)}_p(A:B).
\end{equation}
It is simple to see the first inequality. The second inequality can be proved as follows. Suppose $\ket{\psi'}_{ABC\tilde{A}\tilde{B}\tilde{C}}$ is an optimal purification for $\rho_{ABC}$ so that
\begin{equation}
    E^{(3)}_p(A:B:\slashed{C}) = S_{A\tilde{A}} + S_{B\tilde{B}}.
\end{equation}
This state is one of the purifications $\ket{\phi_{AB\tilde{A}\tilde{B}}}=\ket{\psi'}_{ABC\tilde{A}\tilde{B}\tilde{C}}$ for $\rho_{AB}$. Hence, 
\begin{equation}
    E^{(3)}_p(A:B:\slashed{C}) = S_{A\tilde{A}} + S_{B\tilde{B}}\geq \underset{\ket{\phi}_{AB\tilde{A}\tilde{B}}}{min} [S_{A\tilde{A}} + S_{B\tilde{B}}] = E^{(2)}_p(A:B).
\end{equation}

For the quantum state \eqref{eq:classical-bipartite-entanglement}, we have
\begin{equation}
\begin{split}
    E^{(3)}_p(A:B:\slashed{C}) =& min\Big[\sum_i p_i S_{A\tilde{A},i} +\sum_i p_i S_{B\tilde{B},i}\Big]+ 2H(\{p\}_\rho),\\
    E^{(3)}_p(A:\slashed{B}:C) =& min\Big[\sum_i p_i S_{B\tilde{B},i}\Big]+ 2H(\{p\}_\rho),\\
    E^{(3)}_p(\slashed{A}:B:C) =& min\Big[\sum_i p_i S_{A\tilde{A},i} \Big]+ 2H(\{p\}_\rho).\\
\end{split}
\end{equation}
Then, we can see that
\begin{equation}
    \hat{\Delta}^{(3)}_p(A:B:C) = 0.
\end{equation}

It is simple to see that it vanishes for a mixed product state as well, for instance, $\rho_{ABC} = \rho_{AB}\otimes \rho_C$. For such a state, we get
\begin{equation}
\begin{split}
    E_p^{(3)}(A:B:C) &= E^{(2)}_p(A:BC) = E^{(2)}_p(B:AC) = min\Big[S_{A\tilde{A}} + S_{B\tilde{B}}\Big],\\
    E_p^{(2)}(C:AB)&=0,\\
\end{split}
\end{equation}
and
\begin{equation}
\begin{split}
    E^{(3)}_p(A:B:\slashed{C}) =& min\Big[S_{A\tilde{A}} + S_{B\tilde{B}}\Big],\\
    E^{(3)}_p(A:\slashed{B}:C) =& min\Big[S_{A\tilde{A}}\Big],\\
    E^{(3)}_p(\slashed{A}:B:C) =& min\Big[S_{B\tilde{B}} \Big].\\
\end{split}
\end{equation}
Hence, 
\begin{equation}
    \hat{\Delta}^{(3)}_p(A:B:C) = min\Big[S_{A\tilde{A}} + S_{B\tilde{B}}\Big] - min\Big[S_{A\tilde{A}}\Big] -min\Big[S_{B\tilde{B}} \Big] = 0
\end{equation}
because $S_{A\tilde{A}} = S_{B\tilde{B}}$.

However, we do not have proof for $\hat{\Delta}^{(3)}_p(A:B:C)\geq 0$. We will study its properties in detail in the future.

\bibliographystyle{JHEP}
\bibliography{main}

@article{Umemoto-2018-MultipartiteEoP,
	abstract = {We introduce a new information-theoretic measure of multipartite correlations ΔP, by generalizing the entanglement of purification to multipartite states. We provide proofs of its various properties, focusing on several entropic inequalities, in generic quantum systems. In particular, it turns out that the multipartite entanglement of purification gives an upper bound on multipartite mutual information, which is a generalization of quantum mutual information in the spirit of relative entropy. After that, motivated by a tensor network description of the AdS/CFT correspondence, we conjecture a holographic dual of multipartite entanglement of purification ΔW, as a sum of minimal areas of codimension-2 surfaces which divide the entanglement wedge into multi-pieces. We prove that this geometrical quantity satisfies all properties we proved for the multipartite entanglement of purification. These agreements strongly support the ΔP = ΔW conjecture. We also show that the multipartite entanglement of purification is larger than multipartite squashed entanglement, which is a promising measure of multipartite quantum entanglement. We discuss potential saturation of multipartite squashed entanglement onto multipartite mutual information in holographic CFTs and its applications.},
	author = {Umemoto, Koji and Zhou, Yang},
	date = {2018/10/24},
	date-added = {2025-03-05 10:18:08 -0500},
	date-modified = {2025-03-05 10:18:32 -0500},
	doi = {10.1007/JHEP10(2018)152},
	id = {Umemoto2018},
	isbn = {1029-8479},
	journal = {Journal of High Energy Physics},
	number = {10},
	pages = {152},
	title = {Entanglement of purification for multipartite states and its holographic dual},
	url = {https://doi.org/10.1007/JHEP10(2018)152},
	volume = {2018},
	year = {2018},
	bdsk-url-1 = {https://doi.org/10.1007/JHEP10(2018)152}}

@article{Bao-2019-conditionalEoP,
  title = {Conditional and multipartite entanglements of purification and holography},
  author = {Bao, Ning and Halpern, Illan F.},
  journal = {Phys. Rev. D},
  volume = {99},
  issue = {4},
  pages = {046010},
  numpages = {7},
  year = {2019},
  month = {Feb},
  publisher = {American Physical Society},
  doi = {10.1103/PhysRevD.99.046010},
  url = {https://link.aps.org/doi/10.1103/PhysRevD.99.046010}
}

@article{Balasubramanian:2024ysu,
    author = "Balasubramanian, Vijay and Kang, Monica Jinwoo and Murdia, Chitraang and Ross, Simon F.",
    title = "{Signals of multiparty entanglement and holography}",
    eprint = "2411.03422",
    archivePrefix = "arXiv",
    primaryClass = "hep-th",
    month = "11",
    year = "2024"
}

@article{Iizuka:2025caq,
    author = "Iizuka, Norihiro and Lin, Simon and Nishida, Mitsuhiro",
    title = "{More on genuine multi-entropy and holography}",
    eprint = "2504.16589",
    archivePrefix = "arXiv",
    primaryClass = "hep-th",
    month = "4",
    year = "2025"
}

@article{Iizuka:2025ioc,
    author = "Iizuka, Norihiro and Nishida, Mitsuhiro",
    title = "{Genuine multi-entropy and holography}",
    eprint = "2502.07995",
    archivePrefix = "arXiv",
    primaryClass = "hep-th",
    month = "2",
    year = "2025"
}

@article{Gadde:2022cqi,
    author = "Gadde, Abhijit and Krishna, Vineeth and Sharma, Trakshu",
    title = "{New multipartite entanglement measure and its holographic dual}",
    eprint = "2206.09723",
    archivePrefix = "arXiv",
    primaryClass = "hep-th",
    reportNumber = "TIFR/TH/22-34",
    doi = "10.1103/PhysRevD.106.126001",
    journal = "Phys. Rev. D",
    volume = "106",
    number = "12",
    pages = "126001",
    year = "2022"
}

@article{Basak:2024uwc,
    author = "Basak, Jaydeep Kumar and Malvimat, Vinay and Yoon, Junggi",
    title = "{A New Genuine Multipartite Entanglement Measure: from Qubits to Multiboundary Wormholes}",
    eprint = "2411.11961",
    archivePrefix = "arXiv",
    primaryClass = "hep-th",
    month = "11",
    year = "2024"
}

@article{Gadde:2023zzj,
    author = "Gadde, Abhijit and Krishna, Vineeth and Sharma, Trakshu",
    title = "{Towards a classification of holographic multi-partite entanglement measures}",
    eprint = "2304.06082",
    archivePrefix = "arXiv",
    primaryClass = "hep-th",
    doi = "10.1007/JHEP08(2023)202",
    journal = "JHEP",
    volume = "08",
    pages = "202",
    year = "2023"
}

@article{Gadde:2023zni,
    author = "Gadde, Abhijit and Jain, Shraiyance and Krishna, Vineeth and Kulkarni, Harshal and Sharma, Trakshu",
    title = "{Monotonicity conjecture for multi-party entanglement. Part I}",
    eprint = "2308.16247",
    archivePrefix = "arXiv",
    primaryClass = "hep-th",
    doi = "10.1007/JHEP02(2024)025",
    journal = "JHEP",
    volume = "02",
    pages = "025",
    year = "2024"
}

@article{Gadde:2024jfi,
    author = "Gadde, Abhijit and Jain, Shraiyance and Kulkarni, Harshal",
    title = "{Multi-partite entanglement monotones}",
    eprint = "2406.17447",
    archivePrefix = "arXiv",
    primaryClass = "quant-ph",
    reportNumber = "TIFR/TH/24-10",
    month = "6",
    year = "2024"
}

@article{Gadde:2024taa,
    author = "Gadde, Abhijit and Harper, Jonathan and Krishna, Vineeth",
    title = "{Multi-invariants and Bulk Replica Symmetry}",
    eprint = "2411.00935",
    archivePrefix = "arXiv",
    primaryClass = "hep-th",
    reportNumber = "YITP-24-130, LCTP-24-19",
    month = "11",
    year = "2024"
}

@article{Harper:2024ker,
    author = "Harper, Jonathan and Takayanagi, Tadashi and Tsuda, Takashi",
    title = "{Multi-entropy at low Renyi index in 2d CFTs}",
    eprint = "2401.04236",
    archivePrefix = "arXiv",
    primaryClass = "hep-th",
    reportNumber = "YITP-24-02",
    doi = "10.21468/SciPostPhys.16.5.125",
    journal = "SciPost Phys.",
    volume = "16",
    number = "5",
    pages = "125",
    year = "2024"
}

@article{Ju:2024hba,
    author = "Ju, Xin-Xiang and Pan, Wen-Bin and Sun, Ya-Wen and Zhao, Yang",
    title = "{Holographic multipartite entanglement from the upper bound of $n$-partite information}",
    eprint = "2411.07790",
    archivePrefix = "arXiv",
    primaryClass = "hep-th",
    month = "11",
    year = "2024"
}

@article{Ju:2024kuc,
    author = "Ju, Xin-Xiang and Pan, Wen-Bin and Sun, Ya-Wen and Wang, Yuan-Tai and Zhao, Yang",
    title = "{More on the upper bound of holographic n-partite information}",
    eprint = "2411.19207",
    archivePrefix = "arXiv",
    primaryClass = "hep-th",
    doi = "10.1007/JHEP03(2025)184",
    journal = "JHEP",
    volume = "03",
    pages = "184",
    year = "2025"
}

@article{Ju:2025tgg,
    author = "Ju, Xin-Xiang and Sun, Ya-Wen and Zhao, Yang",
    title = "{Upper bound of holographic entanglement entropy combinations}",
    eprint = "2505.11059",
    archivePrefix = "arXiv",
    primaryClass = "hep-th",
    month = "5",
    year = "2025"
}

@article{Balasubramanian:2014hda,
    author = "Balasubramanian, Vijay and Hayden, Patrick and Maloney, Alexander and Marolf, Donald and Ross, Simon F.",
    title = "{Multiboundary Wormholes and Holographic Entanglement}",
    eprint = "1406.2663",
    archivePrefix = "arXiv",
    primaryClass = "hep-th",
    doi = "10.1088/0264-9381/31/18/185015",
    journal = "Class. Quant. Grav.",
    volume = "31",
    pages = "185015",
    year = "2014"
}

@article{Cui:2018dyq,
    author = "Cui, Shawn X. and Hayden, Patrick and He, Temple and Headrick, Matthew and Stoica, Bogdan and Walter, Michael",
    title = "{Bit Threads and Holographic Monogamy}",
    eprint = "1808.05234",
    archivePrefix = "arXiv",
    primaryClass = "hep-th",
    reportNumber = "BRX-6330, Brown-HET-1764, MIT-CTP-5036",
    doi = "10.1007/s00220-019-03510-8",
    journal = "Commun. Math. Phys.",
    volume = "376",
    number = "1",
    pages = "609--648",
    year = "2019"
}

@article{Harper:2019lff,
    author = "Harper, Jonathan and Headrick, Matthew",
    title = "{Bit threads and holographic entanglement of purification}",
    eprint = "1906.05970",
    archivePrefix = "arXiv",
    primaryClass = "hep-th",
    doi = "10.1007/JHEP08(2019)101",
    journal = "JHEP",
    volume = "08",
    pages = "101",
    year = "2019"
}

@article{Ma:2023ecg,
    author = "Ma, Mengru and Li, Yinfei and Shang, Jiangwei",
    title = "{Multipartite entanglement measures: a review}",
    eprint = "2309.09459",
    archivePrefix = "arXiv",
    primaryClass = "quant-ph",
    doi = "10.1016/j.fmre.2024.03.031",
    month = "9",
    year = "2023"
}

@article{quantum-entanglement-review,
  title = {Quantum entanglement},
  author = {Horodecki, Ryszard and Horodecki, Pawe\l{} and Horodecki, Micha\l{} and Horodecki, Karol},
  journal = {Rev. Mod. Phys.},
  volume = {81},
  issue = {2},
  pages = {865--942},
  numpages = {0},
  year = {2009},
  month = {Jun},
  publisher = {American Physical Society},
  doi = {10.1103/RevModPhys.81.865},
  url = {https://link.aps.org/doi/10.1103/RevModPhys.81.865}
}

@article{Swingle:2017blx,
    author = "Swingle, Brian",
    title = "{Spacetime from Entanglement}",
    doi = "10.1146/annurev-conmatphys-033117-054219",
    journal = "Ann. Rev. Condensed Matter Phys.",
    volume = "9",
    pages = "345--358",
    year = "2018"
}

@article{Maldacena:1997re,
    author = "Maldacena, Juan Martin",
    title = "{The Large $N$ limit of superconformal field theories and supergravity}",
    eprint = "hep-th/9711200",
    archivePrefix = "arXiv",
    reportNumber = "HUTP-97-A097, HUTP-98-A097",
    doi = "10.4310/ATMP.1998.v2.n2.a1",
    journal = "Adv. Theor. Math. Phys.",
    volume = "2",
    pages = "231--252",
    year = "1998"
}

@article{VanRaamsdonk:2010pw,
    author = "Van Raamsdonk, Mark",
    title = "{Building up spacetime with quantum entanglement}",
    eprint = "1005.3035",
    archivePrefix = "arXiv",
    primaryClass = "hep-th",
    doi = "10.1142/S0218271810018529",
    journal = "Gen. Rel. Grav.",
    volume = "42",
    pages = "2323--2329",
    year = "2010"
}

@article{Ryu:2006ef,
    author = "Ryu, Shinsei and Takayanagi, Tadashi",
    title = "{Aspects of Holographic Entanglement Entropy}",
    eprint = "hep-th/0605073",
    archivePrefix = "arXiv",
    reportNumber = "NSF-KITP-06-31, KUNS-2021",
    doi = "10.1088/1126-6708/2006/08/045",
    journal = "JHEP",
    volume = "08",
    pages = "045",
    year = "2006"
}

@article{Maldacena:2013xja,
    author = "Maldacena, Juan and Susskind, Leonard",
    title = "{Cool horizons for entangled black holes}",
    eprint = "1306.0533",
    archivePrefix = "arXiv",
    primaryClass = "hep-th",
    doi = "10.1002/prop.201300020",
    journal = "Fortsch. Phys.",
    volume = "61",
    pages = "781--811",
    year = "2013"
}

@article{Pastawski:2015qua,
    author = "Pastawski, Fernando and Yoshida, Beni and Harlow, Daniel and Preskill, John",
    title = "{Holographic quantum error-correcting codes: Toy models for the bulk/boundary correspondence}",
    eprint = "1503.06237",
    archivePrefix = "arXiv",
    primaryClass = "hep-th",
    doi = "10.1007/JHEP06(2015)149",
    journal = "JHEP",
    volume = "06",
    pages = "149",
    year = "2015"
}

@article{Brown:2015bva,
    author = "Brown, Adam R. and Roberts, Daniel A. and Susskind, Leonard and Swingle, Brian and Zhao, Ying",
    title = "{Holographic Complexity Equals Bulk Action?}",
    eprint = "1509.07876",
    archivePrefix = "arXiv",
    primaryClass = "hep-th",
    doi = "10.1103/PhysRevLett.116.191301",
    journal = "Phys. Rev. Lett.",
    volume = "116",
    number = "19",
    pages = "191301",
    year = "2016"
}

@article{Swingle:2014uza,
    author = "Swingle, Brian and Van Raamsdonk, Mark",
    title = "{Universality of Gravity from Entanglement}",
    eprint = "1405.2933",
    archivePrefix = "arXiv",
    primaryClass = "hep-th",
    month = "5",
    year = "2014"
}

@article{Swingle:2012wq,
    author = "Swingle, Brian",
    title = "{Constructing holographic spacetimes using entanglement renormalization}",
    eprint = "1209.3304",
    archivePrefix = "arXiv",
    primaryClass = "hep-th",
    month = "9",
    year = "2012"
}

@article{Bao:2015bfa,
    author = "Bao, Ning and Nezami, Sepehr and Ooguri, Hirosi and Stoica, Bogdan and Sully, James and Walter, Michael",
    title = "{The Holographic Entropy Cone}",
    eprint = "1505.07839",
    archivePrefix = "arXiv",
    primaryClass = "hep-th",
    reportNumber = "CALT-TH-2015-020, IPMU15-0074, SLAC-PUB-16294, SU-ITP-15-08",
    doi = "10.1007/JHEP09(2015)130",
    journal = "JHEP",
    volume = "09",
    pages = "130",
    year = "2015"
}

@article{Penington:2022dhr,
    author = "Penington, Geoff and Walter, Michael and Witteveen, Freek",
    title = "{Fun with replicas: tripartitions in tensor networks and gravity}",
    eprint = "2211.16045",
    archivePrefix = "arXiv",
    primaryClass = "hep-th",
    doi = "10.1007/JHEP05(2023)008",
    journal = "JHEP",
    volume = "05",
    pages = "008",
    year = "2023"
}

@article{Bao:2019zqc,
    author = "Bao, Ning and Cheng, Newton",
    title = "{Multipartite Reflected Entropy}",
    eprint = "1909.03154",
    archivePrefix = "arXiv",
    primaryClass = "hep-th",
    doi = "10.1007/JHEP10(2019)102",
    journal = "JHEP",
    volume = "10",
    pages = "102",
    year = "2019"
}

@article{Bao:2018fso,
    author = "Bao, Ning and Chatwin-Davies, Aidan and Remmen, Grant N.",
    title = "{Entanglement of Purification and Multiboundary Wormhole Geometries}",
    eprint = "1811.01983",
    archivePrefix = "arXiv",
    primaryClass = "hep-th",
    doi = "10.1007/JHEP02(2019)110",
    journal = "JHEP",
    volume = "02",
    pages = "110",
    year = "2019"
}

@article{Bao:2019bib,
    author = "Bao, Ning and Cao, ChunJun and Fischetti, Sebastian and Keeler, Cynthia",
    title = "{Towards Bulk Metric Reconstruction from Extremal Area Variations}",
    eprint = "1904.04834",
    archivePrefix = "arXiv",
    primaryClass = "hep-th",
    doi = "10.1088/1361-6382/ab377f",
    journal = "Class. Quant. Grav.",
    volume = "36",
    number = "18",
    pages = "185002",
    year = "2019"
}

@article{Cao:2020uvb,
    author = "Cao, ChunJun and Qi, Xiao-Liang and Swingle, Brian and Tang, Eugene",
    title = "{Building Bulk Geometry from the Tensor Radon Transform}",
    eprint = "2007.00004",
    archivePrefix = "arXiv",
    primaryClass = "hep-th",
    doi = "10.1007/JHEP12(2020)033",
    journal = "JHEP",
    volume = "12",
    pages = "033",
    year = "2020"
}

@article{Bao:2024azn,
    author = "Bao, Ning and Furuya, Keiichiro and Naskar, Joydeep",
    title = "{Towards a complete classification of holographic entropy inequalities}",
    eprint = "2409.17317",
    archivePrefix = "arXiv",
    primaryClass = "hep-th",
    doi = "10.1007/JHEP03(2025)117",
    journal = "JHEP",
    volume = "03",
    pages = "117",
    year = "2025"
}

@article{Hernandez-Cuenca:2022pst,
    author = "Hern\'andez-Cuenca, Sergio and Hubeny, Veronika E. and Rota, Massimiliano",
    title = "{The holographic entropy cone from marginal independence}",
    eprint = "2204.00075",
    archivePrefix = "arXiv",
    primaryClass = "hep-th",
    doi = "10.1007/JHEP09(2022)190",
    journal = "JHEP",
    volume = "09",
    pages = "190",
    year = "2022"
}

@article{Avis:2021xnz,
    author = "Avis, David and Hern\'andez-Cuenca, Sergio",
    title = "{On the foundations and extremal structure of the holographic entropy cone}",
    eprint = "2102.07535",
    archivePrefix = "arXiv",
    primaryClass = "math.CO",
    doi = "10.1016/j.dam.2022.11.016",
    journal = "Discrete Appl. Math.",
    volume = "328",
    pages = "16--39",
    year = "2023"
}

@article{Czech:2021rxe,
    author = "Czech, Bartlomiej and Shuai, Sirui",
    title = "{Holographic Cone of Average Entropies}",
    eprint = "2112.00763",
    archivePrefix = "arXiv",
    primaryClass = "hep-th",
    doi = "10.1038/s42005-022-01019-6",
    journal = "Commun. Phys.",
    volume = "5",
    pages = "244",
    year = "2022"
}

@article{Akers:2019gcv,
    author = "Akers, Chris and Rath, Pratik",
    title = "{Entanglement Wedge Cross Sections Require Tripartite Entanglement}",
    eprint = "1911.07852",
    archivePrefix = "arXiv",
    primaryClass = "hep-th",
    doi = "10.1007/JHEP04(2020)208",
    journal = "JHEP",
    volume = "04",
    pages = "208",
    year = "2020"
}

@article{Hayden:2021gno,
    author = "Hayden, Patrick and Parrikar, Onkar and Sorce, Jonathan",
    title = "{The Markov gap for geometric reflected entropy}",
    eprint = "2107.00009",
    archivePrefix = "arXiv",
    primaryClass = "hep-th",
    doi = "10.1007/JHEP10(2021)047",
    journal = "JHEP",
    volume = "10",
    pages = "047",
    year = "2021"
}

@article{Dutta:2019gen,
    author = "Dutta, Souvik and Faulkner, Thomas",
    title = "{A canonical purification for the entanglement wedge cross-section}",
    eprint = "1905.00577",
    archivePrefix = "arXiv",
    primaryClass = "hep-th",
    doi = "10.1007/JHEP03(2021)178",
    journal = "JHEP",
    volume = "03",
    pages = "178",
    year = "2021"
}

@article{Nguyen:2017yqw,
    author = "Nguyen, Phuc and Devakul, Trithep and Halbasch, Matthew G. and Zaletel, Michael P. and Swingle, Brian",
    title = "{Entanglement of purification: from spin chains to holography}",
    eprint = "1709.07424",
    archivePrefix = "arXiv",
    primaryClass = "hep-th",
    doi = "10.1007/JHEP01(2018)098",
    journal = "JHEP",
    volume = "01",
    pages = "098",
    year = "2018"
}

@article{AkersEoPinRTNs2024,
  title = {Entanglement of purification in random tensor networks},
  author = {Akers, Chris and Faulkner, Thomas and Lin, Simon and Rath, Pratik},
  journal = {Phys. Rev. D},
  volume = {109},
  issue = {10},
  pages = {L101902},
  numpages = {6},
  year = {2024},
  month = {May},
  publisher = {American Physical Society},
  doi = {10.1103/PhysRevD.109.L101902},
  url = {https://link.aps.org/doi/10.1103/PhysRevD.109.L101902}
}

@article{Akers2020EWCSTripartite,
	abstract = {We argue that holographic CFT states require a large amount of tripartite entanglement, in contrast to the conjecture that their entanglement is mostly bipartite. Our evidence is that this mostly-bipartite conjecture is in sharp conflict with two well- supported conjectures about the entanglement wedge cross section surface EW. If EW is related to either the CFT's reflected entropy or its entanglement of purification, then those quantities can differ from the mutual information at {\$}{\$} O{$\backslash$}left({$\backslash$}frac{\{}1{\}}{\{}G{\_}N{\}}{$\backslash$}right). {\$}{\$}We prove that this implies holographic CFT states must have {\$}{\$} O{$\backslash$}left({$\backslash$}frac{\{}1{\}}{\{}G{\_}N{\}}{$\backslash$}right). {\$}{\$}amounts of tripartite entanglement. This proof involves a new Fannes-type inequality for the reflected entropy, which itself has many interesting applications.},
	author = {Akers, Chris and Rath, Pratik},
	date = {2020/04/30},
	date-added = {2025-08-18 16:07:12 -0400},
	date-modified = {2025-08-18 16:07:41 -0400},
	doi = {10.1007/JHEP04(2020)208},
	id = {Akers2020},
	isbn = {1029-8479},
	journal = {Journal of High Energy Physics},
	number = {4},
	pages = {208},
	title = {Entanglement wedge cross sections require tripartite entanglement},
	url = {https://doi.org/10.1007/JHEP04(2020)208},
	volume = {2020},
	year = {2020},
	bdsk-url-1 = {https://doi.org/10.1007/JHEP04(2020)208}}

@article{terhalEoP2002,
	abstract = {We introduce a measure of both quantum as well as classical correlations in a quantum state, the entanglement of purification. We show that the (regularized) entanglement of purification is equal to the entanglement cost of creating a state ρ asymptotically from maximally entangled states, with negligible communication. We prove that the classical mutual information and the quantum mutual information divided by two are lower bounds for the regularized entanglement of purification. We present numerical results of the entanglement of purification for Werner states in H2⊗H2.},
	author = {Terhal, Barbara M. and Horodecki, Micha{\l} and Leung, Debbie W. and DiVincenzo, David P.},
	date-modified = {2025-08-18 17:29:36 -0400},
	doi = {10.1063/1.1498001},
	eprint = {https://pubs.aip.org/aip/jmp/article-pdf/43/9/4286/19183123/4286\_1\_online.pdf},
	issn = {0022-2488},
	journal = {Journal of Mathematical Physics},
	month = {09},
	number = {9},
	pages = {4286-4298},
	title = {The entanglement of purification},
	url = {https://doi.org/10.1063/1.1498001},
	volume = {43},
	year = {2002},
	bdsk-url-1 = {https://doi.org/10.1063/1.1498001}}

@article{Umemoto2018EoPholography,
	abstract = {The gauge/gravity correspondence discovered two decades ago has had a profound influence on how the basic laws in physics should be formulated. In spite of the predictive power of holographic approaches (for example, when they are applied to strongly coupled condensed-matter physics problems), the fundamental reasons behind their success remain unclear. Recently, the role of quantum entanglement has come to the fore. Here we explore a quantity that connects gravity and quantum information in the light of the gauge/gravity correspondence. This is given by the minimal cross-section of the entanglement wedge that connects two disjoint subsystems in a gravity dual. In particular, we focus on various inequalities that are satisfied by this quantity. They suggest that it is a holographic counterpart of the quantity called entanglement of purification, which measures a bipartite correlation in a given mixed state. We give a heuristic argument that supports this identification based on a tensor network interpretation of holography. This predicts that the entanglement of purification satisfies the strong superadditivity for holographic conformal field theories.},
	author = {Umemoto, Koji and Takayanagi, Tadashi},
	date = {2018/06/01},
	date-added = {2025-08-18 18:11:41 -0400},
	date-modified = {2025-08-18 18:12:00 -0400},
	doi = {10.1038/s41567-018-0075-2},
	id = {Umemoto2018},
	isbn = {1745-2481},
	journal = {Nature Physics},
	number = {6},
	pages = {573--577},
	title = {Entanglement of purification through holographic duality},
	url = {https://doi.org/10.1038/s41567-018-0075-2},
	volume = {14},
	year = {2018},
	bdsk-url-1 = {https://doi.org/10.1038/s41567-018-0075-2}}

@article{Harlow2017RTfomulaQEC,
	abstract = {I argue that a version of the quantum-corrected Ryu--Takayanagi formula holds in any quantum error-correcting code. I present this result as a series of theorems of increasing generality, with the final statement expressed in the language of operator-algebra quantum error correction. In AdS/CFT this gives a ``purely boundary''interpretation of the formula. I also extend a recent theorem, which established entanglement-wedge reconstruction in AdS/CFT, when interpreted as a subsystem code, to the more general, and I argue more physical, case of subalgebra codes. For completeness, I include a self-contained presentation of the theory of von Neumann algebras on finite-dimensional Hilbert spaces, as well as the algebraic definition of entropy. The results confirm a close relationship between bulk gauge transformations, edge-modes/soft-hair on black holes, and the Ryu--Takayanagi formula. They also suggest a new perspective on the homology constraint, which basically is to get rid of it in a way that preserves the validity of the formula, but which removes any tension with the linearity of quantum mechanics. Moreover, they suggest a boundary interpretation of the ``bit threads''recently introduced by Freedman and Headrick.},
	author = {Harlow, Daniel},
	date = {2017/09/01},
	date-added = {2025-08-19 12:24:27 -0400},
	date-modified = {2025-08-19 12:24:48 -0400},
	doi = {10.1007/s00220-017-2904-z},
	id = {Harlow2017},
	isbn = {1432-0916},
	journal = {Communications in Mathematical Physics},
	number = {3},
	pages = {865--912},
	title = {The Ryu--Takayanagi Formula from Quantum Error Correction},
	url = {https://doi.org/10.1007/s00220-017-2904-z},
	volume = {354},
	year = {2017},
	bdsk-url-1 = {https://doi.org/10.1007/s00220-017-2904-z}}

@article{Pastawski2015Happy,
	abstract = {We propose a family of exactly solvable toy models for the AdS/CFT correspondence based on a novel construction of quantum error-correcting codes with a tensor network structure. Our building block is a special type of tensor with maximal entanglement along any bipartition, which gives rise to an isometry from the bulk Hilbert space to the boundary Hilbert space. The entire tensor network is an encoder for a quantum error-correcting code, where the bulk and boundary degrees of freedom may be identified as logical and physical degrees of freedom respectively. These models capture key features of entanglement in the AdS/CFT correspondence; in particular, the Ryu-Takayanagi formula and the negativity of tripartite information are obeyed exactly in many cases. That bulk logical operators can be represented on multiple boundary regions mimics the Rindlerwedge reconstruction of boundary operators from bulk operators, realizing explicitly the quantum error-correcting features of AdS/CFT recently proposed in {$[$}1{$]$}.},
	author = {Pastawski, Fernando and Yoshida, Beni and Harlow, Daniel and Preskill, John},
	date = {2015/06/23},
	date-added = {2025-08-19 12:25:34 -0400},
	date-modified = {2025-08-19 12:25:54 -0400},
	doi = {10.1007/JHEP06(2015)149},
	id = {Pastawski2015},
	isbn = {1029-8479},
	journal = {Journal of High Energy Physics},
	number = {6},
	pages = {149},
	title = {Holographic quantum error-correcting codes: toy models for the bulk/boundary correspondence},
	url = {https://doi.org/10.1007/JHEP06(2015)149},
	volume = {2015},
	year = {2015},
	bdsk-url-1 = {https://doi.org/10.1007/JHEP06(2015)149}}

@article{Almheiri2015bulklocality,
	abstract = {We point out a connection between the emergence of bulk locality in AdS/CFT and the theory of quantum error correction. Bulk notions such as Bogoliubov transformations, location in the radial direction, and the holographic entropy bound all have natural CFT interpretations in the language of quantum error correction. We also show that the question of whether bulk operator reconstruction works only in the causal wedge or all the way to the extremal surface is related to the question of whether or not the quantum error correcting code realized by AdS/CFT is also a ``quantum secret sharing scheme'', and suggest a tensor network calculation that may settle the issue. Interestingly, the version of quantum error correction which is best suited to our analysis is the somewhat nonstandard ``operator algebra quantum error correction''of Beny, Kempf, and Kribs. Our proposal gives a precise formulation of the idea of ``subregion-subregion''duality in AdS/CFT, and clarifies the limits of its validity.},
	author = {Almheiri, Ahmed and Dong, Xi and Harlow, Daniel},
	date = {2015/04/29},
	date-added = {2025-08-19 12:26:41 -0400},
	date-modified = {2025-08-19 12:26:59 -0400},
	doi = {10.1007/JHEP04(2015)163},
	id = {Almheiri2015},
	isbn = {1029-8479},
	journal = {Journal of High Energy Physics},
	number = {4},
	pages = {163},
	title = {Bulk locality and quantum error correction in AdS/CFT},
	url = {https://doi.org/10.1007/JHEP04(2015)163},
	volume = {2015},
	year = {2015},
	bdsk-url-1 = {https://doi.org/10.1007/JHEP04(2015)163}}

@article{Faulkner:2020hzi,
    author = "Faulkner, Thomas",
    title = "{The holographic map as a conditional expectation}",
    eprint = "2008.04810",
    archivePrefix = "arXiv",
    primaryClass = "hep-th",
    month = "8",
    year = "2020"
}

@article{Furuya2022Petz,
	abstract = {There are two parts to this work: first, we study the error correction properties of the real-space renormalization group (RG). The long-distance operators are the (approximately) correctable operators encoded in the physical algebra of short-distance operators. This is closely related to modeling the holographic map as a quantum error correction code. As opposed to holography, the real-space RG of a many-body quantum system does not have the complementary recovery property. We discuss the role of large N and a large gap in the spectrum of operators in the emergence of complementary recovery.},
	author = {Furuya, Keiichiro and Lashkari, Nima and Ouseph, Shoy},
	date = {2022/01/27},
	date-added = {2025-08-19 12:29:02 -0400},
	date-modified = {2025-08-19 12:29:16 -0400},
	doi = {10.1007/JHEP01(2022)170},
	id = {Furuya2022},
	isbn = {1029-8479},
	journal = {Journal of High Energy Physics},
	number = {1},
	pages = {170},
	title = {Real-space RG, error correction and Petz map},
	url = {https://doi.org/10.1007/JHEP01(2022)170},
	volume = {2022},
	year = {2022},
	bdsk-url-1 = {https://doi.org/10.1007/JHEP01(2022)170}}

@article{Bao2019BeyondDistillation,
	abstract = {We present a general procedure for constructing tensor networks that accurately reproduce holographic states in conformal field theories (CFTs). Given a state in a large-N CFT with a static, semiclassical gravitational dual, we build a tensor network by an iterative series of approximations that eliminate redundant degrees of freedom and minimize the bond dimensions of the resulting network. We argue that the bond dimensions of the tensor network will match the areas of the corresponding bulk surfaces. For ``tree''tensor networks (i.e., those that are constructed by discretizing spacetime with non-intersecting Ryu-Takayanagi surfaces), our arguments can be made rigorous using a version of one-shot entanglement distillation in the CFT. Using the known quantum error correcting properties of AdS/CFT, we show that bulk legs can be added to the tensor networks to create holographic quantum error correcting codes. These codes behave similarly to previous holographic tensor network toy models, but describe actual bulk excitations in continuum AdS/CFT. By assuming some natural generalizations of the ``holographic entanglement of purification''conjecture, we are able to construct tensor networks for more general bulk discretizations, leading to finer-grained networks that partition the information content of a Ryu-Takayanagi surface into tensor-factorized subregions. While the granularity of such a tensor network must be set larger than the string/Planck scales, we expect that it can be chosen to lie well below the AdS scale. However, we also prove a no-go theorem which shows that the bulk-to-boundary maps cannot all be isometries in a tensor network with intersecting Ryu-Takayanagi surfaces.},
	author = {Bao, Ning and Penington, Geoffrey and Sorce, Jonathan and Wall, Aron C.},
	date = {2019/11/12},
	date-added = {2025-08-28 11:23:42 -0400},
	date-modified = {2025-08-28 11:24:14 -0400},
	doi = {10.1007/JHEP11(2019)069},
	id = {Bao2019},
	isbn = {1029-8479},
	journal = {Journal of High Energy Physics},
	number = {11},
	pages = {69},
	title = {Beyond toy models: distilling tensor networks in full AdS/CFT},
	url = {https://doi.org/10.1007/JHEP11(2019)069},
	volume = {2019},
	year = {2019},
	bdsk-url-1 = {https://doi.org/10.1007/JHEP11(2019)069}}

@article{Mori2022distillation,
  title = {Entanglement distillation toward minimal bond cut surface in tensor networks},
  author = {Mori, Takato and Manabe, Hidetaka and Matsueda, Hiroaki},
  journal = {Phys. Rev. D},
  volume = {106},
  issue = {8},
  pages = {086008},
  numpages = {10},
  year = {2022},
  month = {Oct},
  publisher = {American Physical Society},
  doi = {10.1103/PhysRevD.106.086008},
  url = {https://link.aps.org/doi/10.1103/PhysRevD.106.086008}
}

@article{Mori:2024gwe,
    author = "Mori, Takato and Yoshida, Beni",
    title = "{Does connected wedge imply distillable entanglement?}",
    eprint = "2411.03426",
    archivePrefix = "arXiv",
    primaryClass = "hep-th",
    reportNumber = "YITP-24-149",
    month = "11",
    year = "2024"
}

@article{Gun2024distillation,
	abstract = {We study correlations between geometric subfactors living on the Ryu-Takayanagi surface that bounds the entanglement wedge. Using the surface-state correspondence and the bit threads program, we are able to calculate mutual information and conditional mutual information between subfactors. This enables us to count the shared Bell pairs between subfactors, and we propose an entanglement distillation procedure over these subsystems via a SWAP gate protocol. We comment on extending to multipartite entanglement.},
	author = {Bao, Ning and S{\"u}er, G{\"u}n},
	date = {2024/01/17},
	date-added = {2025-08-28 11:31:11 -0400},
	date-modified = {2025-08-28 11:31:25 -0400},
	doi = {10.1007/JHEP01(2024)091},
	id = {Bao2024},
	isbn = {1029-8479},
	journal = {Journal of High Energy Physics},
	number = {1},
	pages = {91},
	title = {Holographic entanglement distillation from the surface state correspondence},
	url = {https://doi.org/10.1007/JHEP01(2024)091},
	volume = {2024},
	year = {2024},
	bdsk-url-1 = {https://doi.org/10.1007/JHEP01(2024)091}}

@article{Czech:2025jnw,
    author = "Czech, Bartlomiej and Shuai, Sirui and Wang, Yixu",
    title = "{Entropy Inequalities Constrain Holographic Erasure Correction}",
    eprint = "2502.12246",
    archivePrefix = "arXiv",
    primaryClass = "hep-th",
    month = "2",
    year = "2025"
}

@article{Caputa2019HEoPinCFT,
  title = {Holographic Entanglement of Purification from Conformal Field Theories},
  author = {Caputa, Pawel and Miyaji, Masamichi and Takayanagi, Tadashi and Umemoto, Koji},
  journal = {Phys. Rev. Lett.},
  volume = {122},
  issue = {11},
  pages = {111601},
  numpages = {6},
  year = {2019},
  month = {Mar},
  publisher = {American Physical Society},
  doi = {10.1103/PhysRevLett.122.111601},
  url = {https://link.aps.org/doi/10.1103/PhysRevLett.122.111601}
}

@article{Bhattacharyya2019freefields-spins-EoP,
  title = {Entanglement of Purification in Many Body Systems and Symmetry Breaking},
  author = {Bhattacharyya, Arpan and Jahn, Alexander and Takayanagi, Tadashi and Umemoto, Koji},
  journal = {Phys. Rev. Lett.},
  volume = {122},
  issue = {20},
  pages = {201601},
  numpages = {6},
  year = {2019},
  month = {May},
  publisher = {American Physical Society},
  doi = {10.1103/PhysRevLett.122.201601},
  url = {https://link.aps.org/doi/10.1103/PhysRevLett.122.201601}
}

@article{Bhattacharyya2018freescalarEoP,
	abstract = {We compute the entanglement of purification (EoP) in a 2d free scalar field theory with various masses. This quantity measures correlations between two subsystems and is reduced to the entanglement entropy when the total system is pure. We obtain explicit numerical values by assuming minimal gaussian wave functionals for the purified states. We find that when the distance between the subsystems is large, the EoP behaves like the mutual information. However, when the distance is small, the EoP shows a characteristic behavior which qualitatively agrees with the conjectured holographic computation and which is different from that of the mutual information. We also study behaviors of mutual information in purified spaces and violations of monogamy/strong superadditivity.},
	author = {Bhattacharyya, Arpan and Takayanagi, Tadashi and Umemoto, Koji},
	date = {2018/04/24},
	date-added = {2025-08-28 12:13:38 -0400},
	date-modified = {2025-08-28 12:13:56 -0400},
	doi = {10.1007/JHEP04(2018)132},
	id = {Bhattacharyya2018},
	isbn = {1029-8479},
	journal = {Journal of High Energy Physics},
	number = {4},
	pages = {132},
	title = {Entanglement of purification in free scalar field theories},
	url = {https://doi.org/10.1007/JHEP04(2018)132},
	volume = {2018},
	year = {2018},
	bdsk-url-1 = {https://doi.org/10.1007/JHEP04(2018)132}}

@article{GUO2019134934,
	abstract = {We investigate entanglement of purification (EoP) in conformal field theory. By using Reeh-Schlieder theorem, we construct a set of the purification states for ρAB, where ρAB is reduced density matrix for subregion AB of a global state ρ. The set can be approximated by acting all the unitary observables, located in the complement of subregion AB, on the global state ρ, as long as the global state ρ is cyclic for every local algebra, e.g., the vacuum state. Combining with the gravity explanation of unitary operations in the context of the so-called surface/state correspondence, we give an explanation of holographic EoP formula. We also explore the projection operator with the conformal basis in conformal field theory. In some limit we may produce the holographic EoP results by using the projection operator. Finally, we discuss the similarity and difference between the projection operator and unitary operations for calculating EoP.},
	author = {Wu-Zhong Guo},
	doi = {https://doi.org/10.1016/j.physletb.2019.134934},
	issn = {0370-2693},
	journal = {Physics Letters B},
	pages = {134934},
	title = {Entanglement of purification and projection operator in conformal field theories},
	url = {https://www.sciencedirect.com/science/article/pii/S0370269319306562},
	volume = {797},
	year = {2019},
	bdsk-url-1 = {https://www.sciencedirect.com/science/article/pii/S0370269319306562},
	bdsk-url-2 = {https://doi.org/10.1016/j.physletb.2019.134934}}

@article{Akers2023RE-RTNsII,
	abstract = {In ref. {$[$}1{$]$}, we analyzed the reflected entropy (SR) in random tensor networks motivated by its proposed duality to the entanglement wedge cross section (EW) in holographic theories, {\$}{\$} {\{}S{\}}{\_}R=2{$\backslash$}frac{\{}EW{\}}{\{}4G{\}} {\$}{\$}. In this paper, we discover further details of this duality by analyzing a simple network consisting of a chain of two random tensors. This setup models a multiboundary wormhole. We show that the reflected entanglement spectrum is controlled by representation theory of the Temperley-Lieb algebra. In the semiclassical limit motivated by holography, the spectrum takes the form of a sum over superselection sectors associated to different irreducible representations of the Temperley-Lieb algebra and labelled by a topological index k ∈ℤ>0. Each sector contributes to the reflected entropy an amount {\$}{\$} 2k{$\backslash$}frac{\{}EW{\}}{\{}4G{\}} {\$}{\$}weighted by its probability. We provide a gravitational interpretation in terms of fixed-area, higher-genus multiboundary wormholes with genus 2k --1 initial value slices. These wormholes appear in the gravitational description of the canonical purification. We confirm the reflected entropy holographic duality away from phase transitions. We also find important non-perturbative contributions from the novel geometries with k ≥2 near phase transitions, resolving the discontinuous transition in SR. Along with analytic arguments, we provide numerical evidence for our results. We finally speculate that signatures of a non-trivial von Neumann algebra, connected to the Temperley-Lieb algebra, will emerge from a modular flowed version of reflected entropy.},
	author = {Akers, Chris and Faulkner, Thomas and Lin, Simon and Rath, Pratik},
	date = {2023/01/13},
	date-added = {2025-08-28 12:20:36 -0400},
	date-modified = {2025-08-28 12:20:57 -0400},
	doi = {10.1007/JHEP01(2023)067},
	id = {Akers2023},
	isbn = {1029-8479},
	journal = {Journal of High Energy Physics},
	number = {1},
	pages = {67},
	title = {Reflected entropy in random tensor networks. Part II. A topological index from canonical purification},
	url = {https://doi.org/10.1007/JHEP01(2023)067},
	volume = {2023},
	year = {2023},
	bdsk-url-1 = {https://doi.org/10.1007/JHEP01(2023)067}}

@article{Akers2022RE-RTNs,
	abstract = {In holographic theories, the reflected entropy has been shown to be dual to the area of the entanglement wedge cross section. We study the same problem in random tensor networks demonstrating an equivalent duality. For a single random tensor we analyze the important non-perturbative effects that smooth out the discontinuity in the reflected entropy across the Page phase transition. By summing over all such effects, we obtain the reflected entanglement spectrum analytically, which agrees well with numerical studies. This motivates a prescription for the analytic continuation required in computing the reflected entropy and its R{\'e}nyi generalization which resolves an order of limits issue previously identified in the literature. We apply this prescription to hyperbolic tensor networks and find answers consistent with holographic expectations. In particular, the random tensor network has the same non-trivial tripartite entanglement structure expected from holographic states. We furthermore show that the reflected R{\'e}nyi spectrum is not flat, in sharp contrast to the usual R{\'e}nyi spectrum of these networks. We argue that the various distinct contributions to the reflected entanglement spectrum can be organized into approximate superselection sectors. We interpret this as resulting from an effective description of the canonically purified state as a superposition of distinct tensor network states. Each network is constructed by doubling and gluing various candidate entanglement wedges of the original network. The superselection sectors are labelled by the different cross-sectional areas of these candidate entanglement wedges.},
	author = {Akers, Chris and Faulkner, Thomas and Lin, Simon and Rath, Pratik},
	date = {2022/05/24},
	date-added = {2025-08-28 12:23:52 -0400},
	date-modified = {2025-08-28 12:24:06 -0400},
	doi = {10.1007/JHEP05(2022)162},
	id = {Akers2022},
	isbn = {1029-8479},
	journal = {Journal of High Energy Physics},
	number = {5},
	pages = {162},
	title = {Reflected entropy in random tensor networks},
	url = {https://doi.org/10.1007/JHEP05(2022)162},
	volume = {2022},
	year = {2022},
	bdsk-url-1 = {https://doi.org/10.1007/JHEP05(2022)162}}

@article{Akers2024RE-RTNsIII,
	abstract = {For general random tensor network states at large bond dimension, we prove that the integer R{\'e}nyi reflected entropies (away from phase transitions) are determined by minimal triway cuts through the network. This generalizes the minimal cut description of bipartite entanglement for these states. A natural extrapolation away from integer R{\'e}nyi parameters, suggested by the triway cut problem, implies the holographic conjecture SR = 2EW, where SR is the reflected entropy and EW is the entanglement wedge cross-section. Minimal triway cuts can be formulated as integer programs which cannot be relaxed to find a dual maximal flow/bit-thread description. This sheds light on the gap between the existence of tripartite entanglement in holographic states and the bipartite entanglement structure motivated by bit-threads. In particular, we prove that the Markov gap that measures tripartite entanglement is lower bounded by the integrality gap of the integer program that computes the triway cut.},
	author = {Akers, Chris and Faulkner, Thomas and Lin, Simon and Rath, Pratik},
	date = {2024/12/27},
	date-added = {2025-08-28 14:12:47 -0400},
	date-modified = {2025-08-28 14:13:04 -0400},
	doi = {10.1007/JHEP12(2024)209},
	id = {Akers2024},
	isbn = {1029-8479},
	journal = {Journal of High Energy Physics},
	number = {12},
	pages = {209},
	title = {Reflected entropy in random tensor networks. Part III. Triway cuts},
	url = {https://doi.org/10.1007/JHEP12(2024)209},
	volume = {2024},
	year = {2024},
	bdsk-url-1 = {https://doi.org/10.1007/JHEP12(2024)209}}

@article{Hayden2013MMI,
  title = {Holographic mutual information is monogamous},
  author = {Hayden, Patrick and Headrick, Matthew and Maloney, Alexander},
  journal = {Phys. Rev. D},
  volume = {87},
  issue = {4},
  pages = {046003},
  numpages = {11},
  year = {2013},
  month = {Feb},
  publisher = {American Physical Society},
  doi = {10.1103/PhysRevD.87.046003},
  url = {https://link.aps.org/doi/10.1103/PhysRevD.87.046003}
}

@article{Cui2020BitThreads,
	abstract = {Bit threads provide an alternative description of holographic entanglement, replacing the Ryu--Takayanagi minimal surface with bulk curves connecting pairs of boundary points. We use bit threads to prove the monogamy of mutual information property of holographic entanglement entropies. This is accomplished using the concept of a so-called multicommodity flow, adapted from the network setting, and tools from the theory of convex optimization. Based on the bit thread picture, we conjecture a general ansatz for a holographic state, involving only bipartite and perfect-tensor type entanglement, for any decomposition of the boundary into four regions. We also give new proofs of analogous theorems on networks.},
	author = {Cui, Shawn X. and Hayden, Patrick and He, Temple and Headrick, Matthew and Stoica, Bogdan and Walter, Michael},
	date = {2020/05/01},
	date-added = {2025-08-28 13:21:43 -0400},
	date-modified = {2025-08-28 13:22:05 -0400},
	doi = {10.1007/s00220-019-03510-8},
	id = {Cui2020},
	isbn = {1432-0916},
	journal = {Communications in Mathematical Physics},
	number = {1},
	pages = {609--648},
	title = {Bit Threads and Holographic Monogamy},
	url = {https://doi.org/10.1007/s00220-019-03510-8},
	volume = {376},
	year = {2020},
	bdsk-url-1 = {https://doi.org/10.1007/s00220-019-03510-8}}

@article{Chandra2023RTN-HoloCodes,
	abstract = {In holographic CFTs satisfying eigenstate thermalization, there is a regime where the operator product expansion can be approximated by a random tensor network. The geometry of the tensor network corresponds to a spatial slice in the holographic dual, with the tensors discretizing the radial direction. In spherically symmetric states in any dimension and more general states in 2d CFT, this leads to a holographic error-correcting code, defined in terms of OPE data, that can be systematically corrected beyond the random tensor approximation. The code is shown to be isometric for light operators outside the horizon, and non-isometric inside, as expected from general arguments about bulk reconstruction. The transition at the horizon occurs due to a subtle breakdown of the Virasoro identity block approximation in states with a complex interior.},
	author = {Chandra, Jeevan and Hartman, Thomas},
	date = {2023/05/15},
	date-added = {2025-08-28 13:25:16 -0400},
	date-modified = {2025-08-28 13:25:45 -0400},
	doi = {10.1007/JHEP05(2023)109},
	id = {Chandra2023},
	isbn = {1029-8479},
	journal = {Journal of High Energy Physics},
	number = {5},
	pages = {109},
	title = {Toward random tensor networks and holographic codes in CFT},
	url = {https://doi.org/10.1007/JHEP05(2023)109},
	volume = {2023},
	year = {2023},
	bdsk-url-1 = {https://doi.org/10.1007/JHEP05(2023)109}}

@article{Verlinde2013BH-QEC,
	abstract = {It was recently argued in {$[$}1{$]$} that black hole complementarity strains the basic rules of quantum information theory, such as monogamy of entanglement. Motivated by this argument, we develop a practical framework for describing black hole evaporation via unitary time evolution, based on a holographic perspective in which all black hole degrees of freedom live on the stretched horizon. We model the horizon as a unitary quantum system with finite entropy, and do not postulate that the horizon geometry is smooth. We then show that, with mild assumptions, one can reconstruct local effective field theory observables that probe the black hole interior, and relative to which the state near the horizon looks like a local Minkowski vacuum. The reconstruction makes use of the formalism of quantum error correcting codes, and works for black hole states whose entanglement entropy does not yet saturate the Bekenstein-Hawking bound. Our general framework clarifies the black hole final state proposal, and allows a quantitative study of the transition into the ``firewall''regime of maximally mixed black hole states.},
	author = {Verlinde, Erik and Verlinde, Herman},
	date = {2013/10/17},
	date-added = {2025-08-28 13:28:59 -0400},
	date-modified = {2025-08-28 13:29:15 -0400},
	doi = {10.1007/JHEP10(2013)107},
	id = {Verlinde2013},
	isbn = {1029-8479},
	journal = {Journal of High Energy Physics},
	number = {10},
	pages = {107},
	title = {Black hole entanglement and quantum error correction},
	url = {https://doi.org/10.1007/JHEP10(2013)107},
	volume = {2013},
	year = {2013},
	bdsk-url-1 = {https://doi.org/10.1007/JHEP10(2013)107}}

@article{Kar2023Non-isometric,
	abstract = {We construct and study an ensemble of non-isometric error correcting codes in a toy model of an evaporating black hole in two-dimensional dilaton gravity. In the preferred bases of Euclidean path integral states in the bulk and Hamiltonian eigenstates in the boundary, the encoding map is proportional to a linear transformation with independent complex Gaussian random entries of zero mean and unit variance. Using measure concentration, we show that the typical such code is very likely to preserve pairwise inner products in a set S of states that can be subexponentially large in the microcanonical Hilbert space dimension of the black hole. The size of this set also serves as an upper limit on the bulk effective field theory Hilbert space dimension. Similar techniques are used to demonstrate the existence of state-specific reconstructions of S-preserving code space unitary operators. State-specific reconstructions on subspaces exist when they are expected to by entanglement wedge reconstruction. We comment on relations to complexity theory and the breakdown of bulk effective field theory.},
	author = {Kar, Arjun},
	date = {2023/02/20},
	date-added = {2025-08-28 13:32:32 -0400},
	date-modified = {2025-08-28 13:32:53 -0400},
	doi = {10.1007/JHEP02(2023)195},
	id = {Kar2023},
	isbn = {1029-8479},
	journal = {Journal of High Energy Physics},
	number = {2},
	pages = {195},
	title = {Non-isometric quantum error correction in gravity},
	url = {https://doi.org/10.1007/JHEP02(2023)195},
	volume = {2023},
	year = {2023},
	bdsk-url-1 = {https://doi.org/10.1007/JHEP02(2023)195}}

@article{Leutheusser2025subsubduality,
  title = {Subregion-subalgebra duality: Emergence of space and time in holography},
  author = {Leutheusser, Sam and Liu, Hong},
  journal = {Phys. Rev. D},
  volume = {111},
  issue = {6},
  pages = {066021},
  numpages = {51},
  year = {2025},
  month = {Mar},
  publisher = {American Physical Society},
  doi = {10.1103/PhysRevD.111.066021},
  url = {https://link.aps.org/doi/10.1103/PhysRevD.111.066021}
}

@article{Faulkner:2022ada,
    author = "Faulkner, Thomas and Li, Min",
    title = "{Asymptotically isometric codes for holography}",
    eprint = "2211.12439",
    archivePrefix = "arXiv",
    primaryClass = "hep-th",
    month = "11",
    year = "2022"
}

@article{Berthiere:2025toi,
    author = "Berthi{\`e}re, Cl{\'e}ment and Gaudin, Paul",
    title = "{Genuine multi-entropy, dihedral invariants and Lifshitz theory}",
    eprint = "2509.00593",
    archivePrefix = "arXiv",
    primaryClass = "hep-th",
    month = "8",
    year = "2025"
}

@article{Harper:2025uui,
    author = "Harper, Jonathan and Mollabashi, Ali and Takayanagi, Tadashi and Tasuki, Kenya",
    title = "{Multi-entropy and the Dihedral Measures at Quantum Critical Points}",
    eprint = "2506.10396",
    archivePrefix = "arXiv",
    primaryClass = "hep-th",
    reportNumber = "YITP-25-87",
    month = "6",
    year = "2025"
}

@article{Balasubramanian:2025hxg,
    author = "Balasubramanian, Vijay and Kang, Monica Jinwoo and Cummings, Charlie and Murdia, Chitraang and Ross, Simon F.",
    title = "{Purely GHZ-like entanglement is forbidden in holography}",
    eprint = "2509.03621",
    archivePrefix = "arXiv",
    primaryClass = "hep-th",
    month = "9",
    year = "2025"
}

@article{BFM:preparation,
    author = "Bao, Ning and Furuya, Keiichiro and March, Jacob",
    title = "",
    eprint = "{In progress}"
}

@article{Freivogel2016gaugeQECadscft,
	abstract = {A puzzling aspect of the AdS/CFT correspondence is that a single bulk operator can be mapped to multiple different boundary operators, or precursors. By improving upon a recent model of Mintun, Polchinski, and Rosenhaus, we demonstrate explicitly how this ambiguity arises in a simple model of the field theory. In particular, we show how gauge invariance in the boundary theory manifests as a freedom in the smearing function used in the bulk-boundary mapping, and explicitly show how this freedom can be used to localize the precursor in different spatial regions. We also show how the ambiguity can be understood in terms of quantum error correction, by appealing to the entanglement present in the CFT. The concordance of these two approaches suggests that gauge invariance and entanglement in the boundary field theory are intimately connected to the reconstruction of local operators in the dual spacetime.},
	author = {Freivogel, Ben and Jefferson, Robert A. and Kabir, Laurens},
	date = {2016/04/19},
	date-added = {2025-09-05 15:50:39 -0400},
	date-modified = {2025-09-05 15:51:12 -0400},
	doi = {10.1007/JHEP04(2016)119},
	id = {Freivogel2016},
	isbn = {1029-8479},
	journal = {Journal of High Energy Physics},
	number = {4},
	pages = {119},
	title = {Precursors, gauge invariance, and quantum error correction in AdS/CFT},
	url = {https://doi.org/10.1007/JHEP04(2016)119},
	volume = {2016},
	year = {2016},
	bdsk-url-1 = {https://doi.org/10.1007/JHEP04(2016)119}}

@article{Hayden2023REnotCorrelation,
  title = {Reflected entropy: Not a correlation measure},
  author = {Hayden, Patrick and Lemm, Marius and Sorce, Jonathan},
  journal = {Phys. Rev. A},
  volume = {107},
  issue = {5},
  pages = {L050401},
  numpages = {4},
  year = {2023},
  month = {May},
  publisher = {American Physical Society},
  doi = {10.1103/PhysRevA.107.L050401},
  url = {https://link.aps.org/doi/10.1103/PhysRevA.107.L050401}
}

@article{Louisia:2025bxz,
    author = "Louisia, Kyan and Mori, Takato and Warner, Herbie",
    title = "{Zoo of Correlation Inequalities in Holography and Beyond}",
    eprint = "2511.21870",
    archivePrefix = "arXiv",
    primaryClass = "hep-th",
    reportNumber = "RUP-25-24",
    month = "11",
    year = "2025"
}

@article{Mori:2025gqe,
    author = "Mori, Takato",
    title = "{Quantum correlation beyond entanglement: Holographic discord and multipartite generalizations}",
    eprint = "2506.02131",
    archivePrefix = "arXiv",
    primaryClass = "hep-th",
    reportNumber = "RUP-25-11, YITP-25-66",
    month = "6",
    year = "2025"
}

\end{document}